\newenvironment{enumpack}{
\vspace{-0.1in}
\begin{enumerate}[(i)]
	\setlength{\partopsep}{0pt}
  \setlength{\itemsep}{1pt}
  \setlength{\parskip}{0pt}
  \setlength{\parsep}{0pt}
}{\end{enumerate}}
\newcolumntype{M}[1]{>{\centering\arraybackslash}m{\dimexpr#1\linewidth}}
\newtheorem{theorem}[definition]{Theorem}
\newtheorem{lemma}[definition]{Lemma}
\newtheorem{proposition}[definition]{Proposition}
\newcommand{\F}{\mathcal{F}}
\renewcommand{\O}{\mathcal{O}}
\renewcommand{\emph}{\textsl}
\newcommand*{\euler}{\mathrm{e}}
\DeclareMathOperator{\str}{str}
\DeclareMathOperator{\sstr}{sstr}
\DeclareMathOperator{\vc}{vc}
\DeclareMathOperator{\dvc}{dvc}
\DeclareMathOperator{\dom}{dom}
\DeclareMathOperator{\ap}{ap}
\newcommand{\ignore}[1]{}
\newcommand{\lk}[1]{\comment{\textcolor{blue}{[LK]: #1}}}
\newcommand{\comment}[1]{\textcolor{blue}{\textbf{#1}}}
\newcommand{\powset}{\mathcal{P}}
\newcommand{\setdef}[2]{\{#1\ :\ #2\}}
\newcommand{\s}{\mathbb{S}}
\newcommand{\system}[2]{\left\langle #1,\ \{0,1\}^{#2}\right\rangle }
\newcommand{\systemexp}[2]{\left\langle #1,\ {#2}\right\rangle }
\newcommand{\sys}[2]{\left\langle #1,\ #2\right\rangle }
\newcommand{\mymerge}{\star}
\newcommand{\defeq}{\overset{\underset{\mathrm{\tiny def}}{}}{=}}
\begin{document}


\title{Shattering, Graph Orientations, and Connectivity}
\subtitle{[Extended Abstract]
}

\numberofauthors{2}
\author{
\alignauthor
L\'{a}szl\'{o} Kozma\\
       \affaddr{Saarland University}\\
       \affaddr{Saarbr\"{u}cken, Germany}\\
       \email{kozma@cs.uni-saarland.de}
\alignauthor
Shay Moran\\
       \affaddr{Max Planck Institute for Informatics}\\
       \affaddr{Saarbr\"{u}cken, Germany}\\
       \email{smoran@mpi-inf.mpg.de}
}

\maketitle

\begin{abstract}

We present a connection between two seemingly disparate fields: VC-theory and graph theory. This connection yields natural correspondences between fundamental concepts in VC-theory, such as shattering and VC-dimension, and well-studied concepts of graph theory related to connectivity, combinatorial optimization, forbidden subgraphs, and others.

In one direction, we use this connection to derive results in graph theory. Our main tool is a generalization of the Sauer-Shelah Lemma~\cite{Pajor,BR95,Dress1}. Using this tool we obtain a series of inequalities and equalities related to properties of orientations of a graph. Some of these results appear to be new, for others we give new and simple proofs.

In the other direction, we present new illustrative examples of shattering-extremal systems - a class of set-systems in VC-theory whose understanding is considered by some authors to be incomplete~\cite{BR95,Greco98,S-ext}. These examples are derived from properties of orientations related to distances and flows in networks.

\end{abstract}

\section{Introduction} \label{sec1}





Orientations of graphs have been widely researched, going back to the celebrated strong-orientation theorem of Robbins (1939)~\cite{robbins}, and its generalization by Nash-Williams (1960) \cite{nash-williams}. Several results analogous to Robbins' theorem have been obtained for other properties of digraphs~\cite{gallai68, frank80, ghouila, chartrand, chvatal}. 
In another direction of research concerning orientations, Frank \cite{frank} has shown that every $k$-strong orientation of a graph can be obtained from any other $k$-strong orientation of the same graph, through a sequence of reversals of directed paths and circuits, such that the $k$-strong connectivity is maintained throughout the sequence.
There are many open problems in this area, both of structural and of algorithmic nature. For a comprehensive account of known results and current research questions, especially relating to connectivity, see~\cite{digraphs}.
A different line of work concerns counting orientations with forbidden subgraphs. These problems have close ties with the theory of random graphs. An example of work of this type is the paper of Alon and Yuster~\cite{alon06}, concerned with the number of orientations that do not contain a copy of a fixed tournament.

In this paper we prove statements concerning properties of orientations and subgraphs of a given graph by introducing a connection between VC-theory and graph theory. The following are a few examples of the statements that we prove: 
\begin{enumerate}
\item{Let $G$ be a graph. Then:

\begin{tabular}{M{.25}M{.01}M{.22}M{.01}M{.30}}
the~number of~connected subgraphs {of~$G$} & $ \geq $ & the~number of~strong orientations of~$G$ & $ \geq $ & the~number of~2-edge-connected subgraphs of $G$.
\end{tabular}}
\item{ Let $G$ be a graph, let $\vec H$ be a digraph, and let $H$ be the graph that underlies $\vec H$. Then:

\begin{tabular}{M{.4}M{.01}M{.4}}
the number of orientations of $G$ that~do~not~contain~a copy~of~$\vec H$ & $\leq$ & the number of subgraphs of $G$ that~do~not~contain~a copy~of~$H$.
\end{tabular}}
\item{Let $G$ be a graph and let $s,t\in V(G)$. Then:

\begin{tabular}{M{.45}M{.01}M{.45}}
the number of orientations in which there are at least $k$ edge-disjoint paths from~$s$~to~$t$ & = & the number of subgraphs in which there are at least $k$ edge-disjoint paths from~$s$~to~$t$.
\end{tabular}}

(A similar statement holds for vertex-disjoint paths.)
\item{Let $O'$ and $O''$ be two orientations of a flow network $N$ such that there exist $s $ - $ t$ flows of size $f$ in both $O'$ and $O''$, and let $d$ be the \emph{Hamming distance} between $O'$ and $O''$. Then there exists a sequence of $d$ \emph{edge flips} that transforms $O'$ to $O''$, such that all the intermediate orientations preserve the property of having a flow of size $f$.}
\end{enumerate}
The technique used for proving such statements relies on the well-known concept of \emph{shattering} and on the less known, dual concept of \emph{strong-shattering}. Shattering is commonly defined as a relation between $\powset{(\powset{(X)})}$ and $\powset{(X)}$, where $\powset{(X)}$ denotes the power set of $X$. Continuing the work of Litman and Moran~\cite{litman_moran,moran_thesis}, we present shattering as a relation between $\powset{(\{0,1\}^X)}$ and $\powset{(X)}$. This facilitates new definitions of shattering and strong-shattering that differ from each other only in the order of the quantifiers (see \S \ref{sec2}). The transposition of quantifiers demonstrates a certain duality between the two concepts of shattering. This duality enables an easy derivation of new, ``dual'' results from known results.


Our main tool is a generalization of the Sauer-Shelah Lemma which we term the \emph{Sandwich Theorem} (Theorem~\ref{thm:sandwich}) \cite{Pajor,BR95,Dress1,Greco98,ShatNews}. This theorem states that the size of a system $\s$ is at most the number of sets shattered by $\s$ and at least the number of sets strongly shattered by $\s$. Interpreting this theorem in the context of graph orientations yields inequalities that link orientations and subgraphs of the same graph. 


Systems for which the Sandwich Theorem collapses into an equality are called \emph{shattering-extremal} ($SE$). These systems were discovered independently several times by different groups of researchers~\cite{Law,BR95,Dress1,Dress2}. As far as we know, Lawrence \cite{Law} was the first to introduce them in his study of convex sets. Interestingly, the definition he gave does not require the concept of shattering. Independently, Bollob{\'a}s and Radcliffe \cite{BR95} discovered these systems, using the {\it shatters} relation (a.k.a.\ {\it traces}). Furthermore, they also introduced the relation of {\it strongly-shatters} (a.k.a.\ {\it strongly-traces}), and characterized shattering-extremal systems using the shatters and strongly-shatters relations. Dress et al.~\cite{Dress1,Dress2} discovered, independently of Bollob\'{a}s et al., the same characterization, and established the equivalence to the definition given by Lawrence. Several characterizations of these systems were given~\cite{Law,BR95,Dress2,Greco98,ShatNews,S-ext,moran_thesis}.

We present two general classes of $SE$ systems that stem from properties of graph orientations. One class is related to properties of distance in a weighted network, the other class is related to properties of flow networks. This allows us, in one direction, to apply known results about $SE$ systems to prove results concerning these two classes (such as (iv) above). In the other direction, the two classes form non-trivial clusters of new examples of $SE$ systems, and thus they may be useful for a better understanding of $SE$ systems. We note that the known characterizations of these systems are considered unsatisfactory by several authors \cite{BR95,Greco98,S-ext}, e.g.: ``...\ a structural description of extremal systems is still sorely lacking'' ~\cite{BR95}.

 
Some of the results presented in this paper can be proven in alternative ways. For example, McDiarmid~\cite{mcdiarmid81} implies some of our results (including (i) above) using general theorems from \emph{clutter percolation}.

\paragraph{Notational issue}
The systems we call shattering-extremal have been independently discovered several times in various contexts~\cite{Law,BR95}. Accordingly, such systems have been referred to as \emph{lopsided}, \emph{Sauer-extremal}, \emph{Pajor-extremal}, among other names. We call these systems \emph{shattering-extremal} ($SE$), following the work of Litman and Moran~\cite{litman_moran, moran_thesis}.


\newpage
\section{Preliminaries} \label{sec2}

In this section we introduce the concepts necessary to formulate and prove our results. The notation related to systems and shattering closely follows Litman and Moran~\cite{litman_moran,moran_thesis}.

\subsection{Systems} \label{subsec:systems}

In this paper, a \emph{system} is a pair $\system{S}{X}$, where $X$ is a set and $S\subseteq \{0,1\}^X$. Given a system ${\s=\system{S}{X}}$, we define the operators $S(\s) = S$, $C(\s) = \{0,1\}^X$, and $\dim(\s) = X$. For simplicity, we use $|\s|$ for $|S(\s)|$. A system $\s$ is \emph{trivial} if $S(\s) \in \{ \emptyset, C(\s)\}$. The  \emph{complement} of $\s$ is $\neg\s = \sys{C(\s)-S(\s)}{C(\s)}$. Note that there are exactly $2^{2^{\lvert X\rvert}}$ systems $\s$ for which ${{C(\s)=\{0,1\}^X}}$.

\subsection{Shattering and strong shattering} \label{subsec22}
Given two arbitrary functions  $f$ and $g$, and $A\subseteq \dom(f)\cap \dom(g)$, we say that $f$ \emph{agrees with} $g$ \emph{on} $A$, if $f(x)=g(x)$ for all $x\in A$. We say that $f$ \emph{agrees with} $g$, if they agree on the entire $\dom(f)\cap \dom(g)$.

We define shattering and strong shattering with the help of the \emph{merging} operator ($\mymerge$). Given two disjoint sets $X$ and $Y$, two functions $f\in \{0,1\}^X$ and $g\in \{0,1\}^Y$, let $f\mymerge g$ denote the unique function in $\{0,1\}^{X\cup Y}$ that agrees with both $f$ and $g$. Note that $\mymerge$ is a commutative and associative operator.

\begin{definition}
Let $\s$ be a system, let $X=\dim(\s)$ and let $Y \subseteq X$. We say that:

$\s$ \emph{shatters} $Y$, if: $$\left({\forall f\in\{0,1\}^Y}\right)\left({\exists g\in\{0,1\}^{X-Y}}\right): g\mymerge f\in S(\s).$$

$\s$ \emph{strongly shatters} $Y$, if: $$ \left({\exists g\in\{0,1\}^{X-Y}}\right)\left({\forall f\in\{0,1\}^Y}\right):g\mymerge f\in S(\s).$$

\end{definition}

Observe that the definitions of \emph{shatters} and \emph{strongly shatters} differ only in the order of the quantifiers. A straightforward application of predicate calculus gives the following result: 
\begin{lemma}\label{obs:dualStrSstr}
Let $\s$ be a system and let $\{X',X''\}$ be a partitioning of $\dim(\s)$. Then, exactly one of the following statements is true:
	\begin{enumpack}
	\item{$\s$ shatters $X'$}
	\item{$\neg\s$ strongly shatters $X''$}.
	~\hfill\qed
	\end{enumpack} 
\end{lemma}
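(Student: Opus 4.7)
The plan is to show that the two statements are logical negations of one another, so that exactly one holds by the law of excluded middle. This is pure quantifier manipulation, aided by the commutativity of the merging operator $\mymerge$ and the definition of $\neg\s$.

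First I would write out the definitions in symbolic form. Statement (i) says
\[
\bigl(\forall f\in\{0,1\}^{X'}\bigr)\bigl(\exists g\in\{0,1\}^{X''}\bigr):\ g\mymerge f\in S(\s).
\]
Negating it (using $\neg\forall = \exists\neg$ and $\neg\exists = \forall\neg$) yields
\[
\bigl(\exists f\in\{0,1\}^{X'}\bigr)\bigl(\forall g\in\{0,1\}^{X''}\bigr):\ g\mymerge f\notin S(\s).
\]

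Next I would translate the right-hand side into a statement about $\neg\s$. Since $C(\neg\s)=C(\s)=\{0,1\}^X$ and $S(\neg\s)=C(\s)-S(\s)$, the condition $g\mymerge f\notin S(\s)$ is the same as $g\mymerge f\in S(\neg\s)$. Renaming the bound variables (calling the witness in $\{0,1\}^{X'}$ by $g$ and the universally quantified element of $\{0,1\}^{X''}$ by $f$) and using commutativity of $\mymerge$ gives
\[
\bigl(\exists g\in\{0,1\}^{X'}\bigr)\bigl(\forall f\in\{0,1\}^{X''}\bigr):\ g\mymerge f\in S(\neg\s),
\]
which, since $X'=\dim(\neg\s)-X''$, is precisely the definition that $\neg\s$ strongly shatters $X''$, i.e.\ statement (ii).

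Thus (ii) is equivalent to $\neg$(i), and the dichotomy follows. The only ``subtlety'' — and it is a very small one — is to keep the roles of the two halves of the partition straight: shattering quantifies the first argument over $\{0,1\}^Y$ and the witness over $\{0,1\}^{X-Y}$, whereas strong shattering reverses the order of the two quantifiers but keeps the same domains, so that the $X'/X''$ split in (i) correctly matches the $X''/X'$ split implicit in (ii). Once this bookkeeping is in place, the proof is a one-line application of predicate calculus.
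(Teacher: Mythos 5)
Your proof is correct and is exactly the argument the paper intends: the paper dismisses this lemma as ``a straightforward application of predicate calculus,'' and you have simply written out that application (negate the quantifiers, pass from $g\mymerge f\notin S(\s)$ to $g\mymerge f\in S(\neg\s)$, and match the domains using commutativity of $\mymerge$). The bookkeeping of the $X'/X''$ roles is handled correctly.
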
 

Two important subsets of $\powset(\dim(\s))$ are associated with $\s$: 

\begin{definition}
The \emph{shattered} and \emph{strongly shattered} sets of a system $\s$ are respectively:
\begin{align*}
\str(\s) & \defeq \setdef{Y\subseteq \dim(\s)}{Y \mbox{ is shattered by } \s}, \\
\sstr(\s) & \defeq \setdef{Y\subseteq \dim(\s)}{Y \mbox{ is strongly shattered by } \s}.
\end{align*}
\end{definition}

Clearly, both $\str(\s)$ and $\sstr(\s)$ are closed under the subset relation, and $\sstr(\s)\subseteq\str(\s)$.

Given a set $X$ and a family $\F\subseteq\powset(X)$, we define the \emph{co-complement} (${}^*$) operator as {{$\F^* = \{Y\subseteq X~:~Y^c\notin\F \}$}}. Observe that $^*$ is an involution\footnote{($\F^*)^*=\F$} and for any two families $\mathcal{A}$ and $\mathcal{B}$, we have $\mathcal{A} \subseteq \mathcal{B} \Leftrightarrow \mathcal{B}^* \subseteq \mathcal{A}^*$. 
With this operator, Lemma \ref{obs:dualStrSstr} can be expressed in the following way:
\begin{lemma}\label{lem:star str and sstr}
Let $\s$ be a system. Then
\begin{align*}
\str(\neg\s) &=  \sstr(\s)^*,\\
\sstr(\neg\s) &=  \str(\s)^*.
\end{align*}
\end{lemma}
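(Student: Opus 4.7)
The plan is to derive both equalities directly from Lemma \ref{obs:dualStrSstr}, which is essentially a quantifier-swap tautology restated in this language. Note first that $\dim(\neg\s)=\dim(\s)$ and $\neg(\neg\s)=\s$, so we may freely apply the partition lemma to either $\s$ or $\neg\s$.

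First I would prove $\str(\neg\s) = \sstr(\s)^{*}$. Fix an arbitrary $Y \subseteq \dim(\s)$ and apply Lemma \ref{obs:dualStrSstr} to the system $\neg\s$ with the partition $\{X', X''\} = \{Y,\, \dim(\s)\setminus Y\}$. The lemma states that exactly one of the following holds: either $\neg\s$ shatters $Y$, or $\neg(\neg\s) = \s$ strongly shatters $\dim(\s)\setminus Y$. Rewriting this equivalence gives
\[
Y \in \str(\neg\s) \;\Longleftrightarrow\; (\dim(\s)\setminus Y) \notin \sstr(\s) \;\Longleftrightarrow\; Y \in \sstr(\s)^{*},
\]
where the last step is the definition of the co-complement operator. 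Since $Y$ was arbitrary, this establishes the first identity.

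For the second identity, $\sstr(\neg\s) = \str(\s)^{*}$, I would simply apply the first identity with $\s$ replaced by $\neg\s$, obtaining $\str(\s) = \sstr(\neg\s)^{*}$; then applying the involution $^{*}$ to both sides (using the stated properties of $^{*}$) yields $\str(\s)^{*} = \sstr(\neg\s)$, as desired.

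There is essentially no obstacle here: the content is Lemma \ref{obs:dualStrSstr} together with the bookkeeping that converts the ``forbidden complement'' formulation of $^{*}$ into the quantifier-exclusion form produced by that lemma. The only mildly delicate point is remembering that the complementation in $^{*}$ is taken inside $\dim(\s)$, which matches the role of $X''=\dim(\s)\setminus Y$ in the partition.
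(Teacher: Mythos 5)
Your proof is correct and matches the paper's intent exactly: the paper presents this lemma as a direct re-expression of Lemma~\ref{obs:dualStrSstr} via the co-complement operator, which is precisely the bookkeeping you carry out (applying the partition lemma to $\neg\s$ pointwise in $Y$, then deriving the second identity from the first by substitution and the involution property of ${}^{*}$). No gaps.
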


The following theorem is the result of accumulated work by different authors, and parts of it were rediscovered independently several times (Pajor~\cite{Pajor}, Bollob{\'a}s and Radcliffe~\cite{BR95}, Dress~\cite{Dress1}, Holzman and Aharoni~\cite{ShatNews,Greco98}).

\begin{theorem}[Sandwich Theorem~\cite{Pajor,BR95,Dress1,ShatNews,Greco98}]\label{thm:sandwich}
For a system $\s$:
$$\lvert \sstr(\s)\rvert \leq \lvert\s\rvert \leq \lvert \str(\s)\rvert.$$
\end{theorem}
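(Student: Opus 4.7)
The plan is to prove the two inequalities asymmetrically: I would establish the upper bound $|\s| \leq |\str(\s)|$ directly by induction, and then obtain the lower bound $|\sstr(\s)| \leq |\s|$ for free by applying the upper bound to $\neg \s$ and invoking the duality in Lemma~\ref{lem:star str and sstr}. This second step is purely formal: from $|\neg\s| \leq |\str(\neg\s)| = |\sstr(\s)^*|$ and the facts $|\neg\s| = 2^{|X|} - |\s|$ and $|\sstr(\s)^*| = 2^{|X|} - |\sstr(\s)|$, the desired inequality pops out. So the whole content of the theorem lies in the upper bound.

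For the upper bound, I would induct on $n = |\dim(\s)|$. The base case $n=0$ is immediate: $S(\s)$ is either $\emptyset$ or $\{\emptyset\}$, and $\emptyset$ is shattered exactly when $S(\s)$ is nonempty, so both sides agree. For the inductive step, pick any $x \in X = \dim(\s)$ and split $S(\s)$ according to the value at $x$: let
\[
S_b = \{f \in S(\s) : f(x)=b\} \qquad (b \in \{0,1\}),
\]
and let $\s_b = \sys{\{f\restriction_{X\setminus\{x\}} : f \in S_b\}}{X\setminus\{x\}}$. Then $|\s| = |\s_0|+|\s_1|$, and by the inductive hypothesis $|\s_b| \leq |\str(\s_b)|$ for each $b$.

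The key combinatorial step is the inequality $|\str(\s_0)| + |\str(\s_1)| \leq |\str(\s)|$. I would prove this by exhibiting two disjoint injections into $\str(\s)$. First, any $Y \subseteq X\setminus\{x\}$ shattered by some $\s_b$ is already shattered by $\s$, so $\str(\s_0) \cup \str(\s_1) \subseteq \str(\s)$. Second, if $Y$ lies in $\str(\s_0) \cap \str(\s_1)$, then every pattern on $Y$ extends inside both $S_0$ and $S_1$, so every pattern on $Y \cup \{x\}$ extends inside $S(\s)$; that is, $Y \cup \{x\} \in \str(\s)$. Since the sets $Y \cup \{x\}$ contain $x$ and the sets in $\str(\s_0) \cup \str(\s_1)$ do not, the two contributions to $\str(\s)$ are disjoint, giving
\[
|\str(\s)| \;\geq\; |\str(\s_0)\cup\str(\s_1)| + |\str(\s_0)\cap\str(\s_1)| \;=\; |\str(\s_0)| + |\str(\s_1)|.
\]
Chaining this with the inductive bounds yields $|\s| = |\s_0|+|\s_1| \leq |\str(\s_0)|+|\str(\s_1)| \leq |\str(\s)|$, completing the induction.

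The only delicate point, and the one I would double-check carefully, is the bookkeeping in the disjoint-injection argument: making sure that we are counting each shattered set in $\str(\s)$ at most once, and that the identification of $Y \subseteq X \setminus \{x\}$ with a subset of $X$ is done consistently on both sides of the inequality. Once this is in place, the duality step for $|\sstr(\s)| \leq |\s|$ is a one-line calculation and no separate induction is needed.
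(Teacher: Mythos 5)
Your proof is correct. The first half --- the induction establishing $|\s| \le |\str(\s)|$ via the split at a coordinate $x$, the two inclusions $\str(\s_0)\cup\str(\s_1) \subseteq \{Y \in \str(\s) : x \notin Y\}$ and $\{Y\cup\{x\} : Y \in \str(\s_0)\cap\str(\s_1)\} \subseteq \{Y \in \str(\s) : x \in Y\}$, and the inclusion--exclusion count --- is exactly the paper's argument for the upper bound, and your bookkeeping (disjointness by membership of $x$, injectivity of $Y \mapsto Y\cup\{x\}$) is sound. Where you genuinely diverge is the lower bound. The paper obtains $|\sstr(\s)| \le |\s|$ by ``dualizing the proof'': mechanically swapping $\str$/$\sstr$, $\subseteq$/$\supseteq$, $\le$/$\ge$ throughout the text of the first induction and asserting that the result is again a valid proof --- a device the authors themselves call ``mysterious and fragile,'' since the dual of a true claim need not be true. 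You instead derive the lower bound as a formal corollary of the upper bound applied to $\neg\s$, using $\str(\neg\s) = \sstr(\s)^*$ (Lemma~\ref{lem:star str and sstr}) together with $|\neg\s| = 2^{|X|} - |\s|$ and $|\F^*| = 2^{|X|} - |\F|$. This is airtight and arguably cleaner: it replaces a meta-level claim about proof texts with a one-line computation, at the modest cost of leaning on Lemma~\ref{lem:star str and sstr} (itself an easy consequence of the quantifier duality in Lemma~\ref{obs:dualStrSstr}). The paper's route, for its part, yields a second self-contained induction and showcases the $\str$/$\sstr$ duality that the authors want to emphasize; yours buys rigor and economy.
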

In the proof of Theorem~\ref{thm:sandwich}, given a system $\s$ and $x\in \dim(\s)$, we consider the following two ``sub-systems'' of $\s$, referred to as the \emph{restrictions of $\s$ associated with $x$}:
	\begin{align*} 
\sys{\{\left.f\right|_{\dim(\s)-\{x\}}~:~f\in\s,~f(x)=0\}}{\{0,1\}^{\dim(\s)-\{x\}}},\\
\sys{\{\left.f\right|_{\dim(\s)-\{x\}}~:~f\in\s,~f(x)=1\}}{\{0,1\}^{\dim(\s)-\{x\}}}.
	\end{align*}

\begin{proof}[of Theorem \ref{thm:sandwich}]
First, prove $\lvert\s\rvert \leq \lvert \str(\s)\rvert$.\\
Proceed by induction on $\dim(\s)$. The case $\dim(\s)=\emptyset$ is trivial. Otherwise, pick $x\in \dim(\s)$ and let $\s',\s''$ be the two restrictions of $\s$ associated with $x$. By the induction hypothesis, we have $\lvert\s'\rvert\leq\lvert\str(\s')\rvert$ and $\lvert\s''\rvert\leq\lvert\str(\s'')\rvert$. It is easy to verify that:
\vspace{-0.03in}
\begin{align*}
\setdef{Y\in \str(\s)}{x\notin Y} &\supseteq \setdef{Y}{Y\in\str(\s')\cup\str(\s'')},\\
\setdef{Y\in \str(\s)}{x\in Y} &\supseteq \setdef{Y\cup\{x\}}{Y\in\str(\s')\cap\str(\s'')}.
\end{align*}

Thus,
\begin{align*}
\lvert\str(\s)\rvert &= \lvert\setdef{Y\in \str(\s)}{x\notin Y}\rvert + \lvert\setdef{Y\in \str(\s)}{x\in Y}\rvert \\
							      &\geq \lvert\setdef{Y}{Y\in\str(\s')\cup\str(\s'')}\rvert\\
 &\quad\quad + \ \lvert\setdef{Y\cup\{x\}}{Y\in\str(\s')\cap\str(\s'')}\rvert \tag{by the above inclusions}\\
							      &= \lvert\str(\s')\cup\str(\s'')\rvert+\lvert\str(\s')\cap\str(\s'')\rvert\\
							      &= \lvert\str(\s')\rvert + \lvert\str(\s'')\rvert\\
							      &\geq \lvert\s'\rvert+\lvert\s''\rvert \tag{by the induction hypothesis}\\
							      &= \lvert\s\rvert.
\end{align*}

Next, prove $\lvert\s\rvert\geq\lvert\sstr(\s)\rvert$.\\
We use a certain duality between \emph{shattering} and \emph{strong shattering}. This duality manifests itself through a mechanical tranformation on text written in ``mathematical English''. It swaps the pair ``$\str$'' and ``$\sstr$'', the pair ``$\subseteq$'' and ``$\supseteq$'', and the pair ``$\leq$'' and ``$\geq$''. Note that the dual of ``$\lvert\s\rvert\leq\lvert\str(\s)\rvert$'' is  ``$\lvert\s\rvert\geq\lvert\sstr(\s)\rvert$''. It is easy to verify that the dual of the proof for $\lvert\s\rvert\leq\lvert\str(\s)\rvert$ is a valid proof for $\lvert\s\rvert\geq\lvert\sstr(\s)\rvert$. 
\end{proof}

It is important to note that the duality used in the proof is mysterious and fragile. It is not hard to find (true) claims whose dual claims are not true. For further discussion on dualities between shattering and strong shattering we refer the reader to Litman and Moran~\cite{litman_moran,moran_thesis}.
\begin{definition}
The \emph{VC-dimension} (Vapnik and Chervonenkis \cite{vapnik}) and the \emph{dual VC-dimension}~\cite{moran_thesis, litman_moran} of a system $\s$ are defined respectively as: 
\begin{eqnarray*}
\vc(\s) & \defeq & \max \setdef{|Y|}{Y\in \str(\s)}\footnotemark,\\
\dvc(\s) & \defeq & \max \setdef{|Y|}{Y\in \sstr(\s)}\footnotemark[\value{footnote}].
\end{eqnarray*}
\end{definition}
\footnotetext{As a special case, $\vc(\s)=\dvc(\s)=-1$ when $S(\s)=\emptyset$.}
Note that by the definition of the VC-dimension:
$$\str(\s)\subseteq\{Y\subseteq \dim(\s)~:~\lvert Y\rvert\leq \vc(\s) \}.$$
Hence, an easy consequence of Theorem \ref{thm:sandwich} is the following result:
\begin{theorem}[Sauer-Shelah Lemma~\cite{sauer,shelah}]\label{thm:Sauer lemma}
For a system $\s$ with $|\dim(\s)|=n$: $$|\s|\leq \sum_{i=0}^{\vc(\s)}{{n}\choose{i}}.$$
\end{theorem}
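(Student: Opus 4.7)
The plan is to combine the two ingredients that the excerpt has already prepared: the upper half of the Sandwich Theorem, $|\s| \leq |\str(\s)|$, and the inclusion $\str(\s) \subseteq \{Y\subseteq\dim(\s) : |Y|\leq\vc(\s)\}$ noted just before the theorem statement. The second inclusion is immediate from the definition of $\vc(\s)$ as the maximum cardinality of a shattered set. Since the right-hand side is simply the family of all subsets of an $n$-element set of size at most $\vc(\s)$, its cardinality is the partial sum of binomial coefficients $\sum_{i=0}^{\vc(\s)}\binom{n}{i}$.

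Concretely, I would write the one-line chain
$$|\s| \;\leq\; |\str(\s)| \;\leq\; \bigl|\{Y\subseteq\dim(\s) : |Y|\leq \vc(\s)\}\bigr| \;=\; \sum_{i=0}^{\vc(\s)}\binom{n}{i},$$
justifying the first inequality by Theorem~\ref{thm:sandwich}, the second by the $\vc$-inclusion above, and the final equality by the standard count of subsets of bounded size in an $n$-set.

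There is essentially no obstacle here; the genuine content lives in the Sandwich Theorem, whose proof is already in the excerpt, and the Sauer--Shelah bound is extracted from it by pure bookkeeping (this is why the excerpt calls the lemma ``an easy consequence''). The only minor point worth mentioning is the degenerate case $S(\s)=\emptyset$, where the convention $\vc(\s)=-1$ makes the sum on the right empty, consistent with $|\s|=0$.
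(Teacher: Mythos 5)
Your proof is correct and follows exactly the route the paper intends: combine the upper half of the Sandwich Theorem, $|\s|\leq|\str(\s)|$, with the inclusion $\str(\s)\subseteq\{Y\subseteq\dim(\s) : |Y|\leq\vc(\s)\}$ and count subsets of size at most $\vc(\s)$. Your remark on the degenerate case $S(\s)=\emptyset$ is a harmless addition; nothing further is needed.
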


\subsection{Shattering-extremal systems} \label{sec41}

In this subsection we look at systems of a particular kind, namely those for which Theorem~\ref{thm:sandwich} collapses into an equality. We call these systems \emph{shattering-extremal}.

\begin{definition}
$\s$ is \emph{shattering-extremal} (in abbreviation: $SE$), if it satisfies $$\sstr(\s)=\str(\s).$$
\end{definition}

From Lemma~\ref{lem:star str and sstr} the following result is immediate:
\begin{lemma}\label{lem:complement}
Let $\s$ be a system. Then 
\begin{equation}
\s \mbox{ is } SE \ \Longleftrightarrow \   \neg \s \mbox{ is } SE. \quad{\qed}\notag
\end{equation}
\end{lemma}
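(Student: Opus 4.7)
The plan is to derive the claim directly from Lemma~\ref{lem:star str and sstr} together with the fact that the co-complement operator ${}^*$ is an involution. By definition, $\s$ is $SE$ means $\sstr(\s)=\str(\s)$, and $\neg\s$ is $SE$ means $\sstr(\neg\s)=\str(\neg\s)$. So the task reduces to showing these two equalities are equivalent.

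First, I would rewrite the right-hand equality using Lemma~\ref{lem:star str and sstr}, which tells us
\[
\sstr(\neg\s) = \str(\s)^{*} \quad \text{and} \quad \str(\neg\s) = \sstr(\s)^{*}.
\]
Hence $\sstr(\neg\s)=\str(\neg\s)$ is the same as $\str(\s)^{*}=\sstr(\s)^{*}$.

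Second, I would invoke the observation (stated just after the definition of ${}^*$) that ${}^*$ is an involution, and in particular a bijection on families of subsets of $\dim(\s)$. Therefore, for any two families $\mathcal{A},\mathcal{B}\subseteq\powset(\dim(\s))$, we have $\mathcal{A}^*=\mathcal{B}^*$ if and only if $\mathcal{A}=\mathcal{B}$ (apply ${}^*$ to both sides and use $(\mathcal{A}^*)^*=\mathcal{A}$). Applying this to $\mathcal{A}=\str(\s)$ and $\mathcal{B}=\sstr(\s)$ shows that $\str(\s)^{*}=\sstr(\s)^{*}$ is equivalent to $\str(\s)=\sstr(\s)$, i.e.\ to $\s$ being $SE$. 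Chaining the two equivalences yields the lemma.

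There is no real obstacle here: everything rests on Lemma~\ref{lem:star str and sstr} (already proved) and on the trivial bijectivity of the involution ${}^*$. The proof will be a two-line chain of ``if and only if''s.
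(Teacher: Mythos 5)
Your proof is correct and follows exactly the route the paper intends: the paper simply states that the lemma is ``immediate'' from Lemma~\ref{lem:star str and sstr}, and your argument (apply that lemma and then use injectivity of the involution ${}^*$) is precisely the spelled-out version of that one-line justification.
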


Similarly to Theorem~\ref{thm:sandwich}, the following result has also been rediscovered independently several times (Bollob{\'a}s and Radcliffe~\cite{BR95}, Dress et al~\cite{Dress2}). 
\begin{theorem}[\cite{BR95,Dress2}]\label{thm:SE char}
Let $\s$ be a system. The following statements are equivalent:
	\begin{enumpack}
	\item{$\s$ is $SE$}
	\item{$\lvert \sstr(\s)\rvert=\lvert \s\rvert$}
	\item{$\lvert \s\rvert=\lvert \str(\s)\rvert. \mbox{\quad\qquad\qed}$} 
	\end{enumpack}
\end{theorem}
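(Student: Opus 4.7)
The plan is to use the Sandwich Theorem (Theorem~\ref{thm:sandwich}) together with the complementation duality of Lemma~\ref{lem:star str and sstr}. Since $\sstr(\s)\subseteq\str(\s)$ is automatic and the Sandwich gives $\lvert\sstr(\s)\rvert\le\lvert\s\rvert\le\lvert\str(\s)\rvert$, the set equality in (i) is equivalent to the conjunction of the two numerical equalities (ii) and (iii); so the essential work is to bridge them. The directions (i)$\Rightarrow$(ii) and (i)$\Rightarrow$(iii) are immediate, since (i) gives $\lvert\sstr(\s)\rvert=\lvert\str(\s)\rvert$ and the Sandwich chain forces both to equal $\lvert\s\rvert$.

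For (iii)$\Rightarrow$(i), I would induct on $\lvert\dim(\s)\rvert$, revisiting the proof of the Sandwich Theorem. Fix $x\in\dim(\s)$ and let $\s',\s''$ be the two restrictions of $\s$ at $x$ as in that proof; introduce also two auxiliary systems on $\{0,1\}^{\dim(\s)-\{x\}}$, namely the \emph{projection} system $P$ with $S(P)=S(\s')\cup S(\s'')$ and the \emph{contraction} system $T$ with $S(T)=S(\s')\cap S(\s'')$. A routine verification gives the set identities $\{Y\in\str(\s):x\notin Y\}=\str(P)$ and $\{Y\in\str(\s):x\in Y\}=\{Y'\cup\{x\}:Y'\in\str(\s')\cap\str(\s'')\}$, together with the count $\lvert\s\rvert=\lvert P\rvert+\lvert T\rvert$. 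Summing the Sandwich inequalities for the four systems $\s',\s'',P,T$, the assumption $\lvert\str(\s)\rvert=\lvert\s\rvert$ forces equality throughout, so all four satisfy (iii) in dimension $\lvert\dim(\s)\rvert-1$ and thus are $SE$ by the inductive hypothesis; in particular $\str(T)=\str(\s')\cap\str(\s'')$ and $\str(P)=\str(\s')\cup\str(\s'')$. A case split on whether $x\in Y$ then transfers $SE$-ness to $\s$: if $x\notin Y$, membership in $\str(P)=\sstr(\s')\cup\sstr(\s'')$ produces a witness for $Y\in\sstr(\s)$ by appending the appropriate value at $x$; if $x\in Y$, the fact that $T$ is $SE$ supplies a \emph{single} $g\in\{0,1\}^{\dim(\s)-Y}$ for which $g\mymerge f'\in S(\s')\cap S(\s'')$ for every $f'\in\{0,1\}^{Y-\{x\}}$, which is precisely the witness required.

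For (ii)$\Rightarrow$(i) I invoke complementation. Since $\lvert\F^*\rvert=2^{\lvert\dim(\s)\rvert}-\lvert\F\rvert$ (as $Y\mapsto Y^c$ is a bijection on $\powset(\dim(\s))$), Lemma~\ref{lem:star str and sstr} gives $\lvert\str(\neg\s)\rvert=2^{\lvert\dim(\s)\rvert}-\lvert\sstr(\s)\rvert$ while $\lvert\neg\s\rvert=2^{\lvert\dim(\s)\rvert}-\lvert\s\rvert$; hence (ii) for $\s$ is equivalent to (iii) for $\neg\s$. Applying the just-proven direction to $\neg\s$ shows $\neg\s$ is $SE$, and Lemma~\ref{lem:complement} then gives that $\s$ is $SE$. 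The main obstacle is the inductive step for (iii)$\Rightarrow$(i): it is easy that $\s'$ and $\s''$ inherit (iii), but the crucial extra fact is that the contraction system $T$ does so as well, which is what closes the argument by supplying the single common witness needed to strongly shatter sets that contain $x$.
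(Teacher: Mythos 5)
The paper does not prove this theorem at all: it is stated with a \qed and attributed to Bollob\'as--Radcliffe and Dress et al., so there is no in-paper argument to compare against. Your proposal is a correct, self-contained proof, and it is the natural one: it sharpens the paper's own induction for the Sandwich Theorem by replacing the inclusions used there with the exact identities $\setdef{Y\in\str(\s)}{x\notin Y}=\str(P)$ and $\setdef{Y\in\str(\s)}{x\in Y}=\setdef{Y'\cup\{x\}}{Y'\in\str(\s')\cap\str(\s'')}$, together with $\lvert\s\rvert=\lvert P\rvert+\lvert T\rvert=\lvert\s'\rvert+\lvert\s''\rvert$; the reduction of (ii) to (iii) via $\neg\s$, Lemma~\ref{lem:star str and sstr}, and $\lvert\F^*\rvert=2^{\lvert\dim(\s)\rvert}-\lvert\F\rvert$ is clean. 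Two places in the write-up are compressed and worth tightening. First, ``summing the Sandwich inequalities for the four systems'' does not by itself connect to $\lvert\str(\s)\rvert$; you need the two chains $\lvert\str(\s)\rvert=\lvert\str(P)\rvert+\lvert\str(\s')\cap\str(\s'')\rvert\geq\lvert P\rvert+\lvert\str(T)\rvert\geq\lvert P\rvert+\lvert T\rvert=\lvert\s\rvert$ and $\lvert\str(\s)\rvert\geq\lvert\str(\s')\cup\str(\s'')\rvert+\lvert\str(\s')\cap\str(\s'')\rvert=\lvert\str(\s')\rvert+\lvert\str(\s'')\rvert\geq\lvert\s'\rvert+\lvert\s''\rvert=\lvert\s\rvert$, and the set identities $\str(P)=\str(\s')\cup\str(\s'')$ and $\str(T)=\str(\s')\cap\str(\s'')$ are forced by this squeeze, not (as your ``in particular'' suggests) by the $SE$-ness of the four systems. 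Second, in the case $x\notin Y$ the argument must go through $\str(P)=\str(\s')\cup\str(\s'')=\sstr(\s')\cup\sstr(\s'')$ and the $SE$-ness of $\s'$ and $\s''$, rather than through $\sstr(P)$: a strong-shattering witness for the union system $P$ need not send all completions into the same restriction, so it would not directly yield a value to append at $x$. Your text does take the correct route here, but the justification deserves to be spelled out. With these points made explicit the proof is complete.
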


\section{Systems of orientations} \label{sec3}


We use standard terminology of graph theory (cf.\ Bollob\'{a}s~\cite{bollobas_book}). Given an undirected graph $G$ with vertex set $V(G)$ (or simply $V$) and edge set $E(G)$ (or simply $E$), let $n=|V|$ and $m=|E|$. In this work, a \emph{subgraph of} $G$ is a graph $G'$ with $V(G')=V(G)$ and $E(G')\subseteq E(G)$. For $X\subseteq E(G)$ we denote by $G_{X}$ the subgraph of $G$ with $E(G_{X})=X$. We consider only simple graphs, however, the results can easily be extended to non-simple graphs.

An \emph{orientation} of a graph $G$ is an assignment of a direction to each edge. To encode such an assignment as a function ${{d:E \rightarrow \{0,1\}}}$, choose a \emph{canonical orientation} $\vec{E}$ and interpret $d$ relative to $\vec{E}$ in the obvious way: $d$ \emph{orients} an edge $e \in E$ in agreement with $\vec{E}$ if $d(e)=0$, and opposing $\vec{E}$ if $d(e)=1$. For an orientation $d$, let $\vec G^d$ denote the digraph obtained by orienting the edges of $G$ according to $d$ and to some canonical orientation. For simplicity, we also refer to the digraph $\vec G^d$ as an \emph{orientation} of $G$. The set of all orientations of $G$ is $\O(G) = \{0,1\}^{E}$. A \emph{system of orientations} of $G$ is a system $\s$, where $C(\s) = \O(G)$.  


\subsection{Cycles and forests}\label{sec31}
As a warm-up, and to illustrate our techniques, we consider the system of all cyclic orientations of a graph $G$, denoted as $\s_{cyc}(G) = \systemexp{\setdef{d \in \O(G)}{\vec G^d \ \mbox{has a directed cycle}}}{\O(G)}$ and its complement, $\neg\s_{cyc}$, namely the system of all acyclic orientations. We prove the following inequalities:

\begin{theorem}\label{thm: cyclic a-cyclic, sstr str}
Let $G$ be a graph. Then:
\begin{enumpack}
\item \begin{tabular}{M{0.43}M{.01}M{0.45}}
the number of orientations of $G$ that contain a directed cycle & $\geq$ & the number of subgraphs of $G$ that contain an undirected cycle,
\end{tabular}
\item \begin{tabular}{M{0.43}M{.01}M{0.45}}
the number of acyclic orientations of $G$ & $\leq$ & the number of subgraphs of $G$ that are forests.
\end{tabular}
\end{enumpack}
\end{theorem}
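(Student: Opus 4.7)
The plan is to apply the Sandwich Theorem (Theorem~\ref{thm:sandwich}) to the system $\s_{ac}(G) = \neg\s_{cyc}(G)$ of acyclic orientations of $G$, and to identify the shattered sets of this system explicitly with the edge-sets of forests in $G$. Given the identification $\O(G) = \{0,1\}^{E(G)}$, a subset $Y \subseteq E(G)$ corresponds to a set of edges, and $G_Y$ is the subgraph on $V(G)$ having exactly these edges.

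The heart of the argument is the following characterization, which I would state and prove as a separate claim:
\begin{equation*}
Y \in \str(\s_{ac}(G)) \quad\Longleftrightarrow\quad G_Y \text{ is a forest.}
\end{equation*}
For the ``$\Rightarrow$'' direction, suppose $G_Y$ contains an undirected cycle $C$. Consider any function $f\in\{0,1\}^Y$ that orients $C$ as a directed cycle. Then for every $g\in\{0,1\}^{E-Y}$, the orientation $g \star f$ of $G$ contains a directed cycle, so it is not in $\s_{ac}(G)$; thus $Y$ is not shattered. For the ``$\Leftarrow$'' direction, suppose $G_Y$ is a forest and fix any $f\in\{0,1\}^Y$. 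Since $(G_Y, f)$ is a DAG, extend its partial order on $V(G)$ to a total order $\prec$ on $V(G)$ and orient every edge $e\in E-Y$ from its $\prec$-smaller endpoint to its $\prec$-larger endpoint; call this extension $g$. Then $g\star f$ is a topological orientation of $G$, hence acyclic, which shows $Y\in\str(\s_{ac}(G))$.

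Given the characterization, Theorem~\ref{thm:sandwich} applied to $\s_{ac}(G)$ yields
\begin{equation*}
|\s_{ac}(G)| \;\le\; |\str(\s_{ac}(G))| \;=\; |\{Y\subseteq E : G_Y\text{ is a forest}\}|,
\end{equation*}
which is exactly (ii). Statement (i) then follows by taking complements on both sides: since $|\s_{cyc}(G)| = 2^m - |\s_{ac}(G)|$ and the number of subgraphs containing an undirected cycle equals $2^m$ minus the number of forest subgraphs, (i) is equivalent to (ii). Alternatively, one can derive (i) directly by applying the lower-bound half of Theorem~\ref{thm:sandwich} to $\s_{cyc}(G)$ and using Lemma~\ref{lem:star str and sstr} together with the characterization above to compute $\sstr(\s_{cyc}(G)) = \{Y : G_{E-Y}\text{ is a forest}\}^*$, which is in bijection with the set of subgraphs containing a cycle.

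The main obstacle is getting the shattering characterization right; in particular, the topological-sort argument for the ``$\Leftarrow$'' direction must ensure that the extension respects the prescribed orientation $f$ on $Y$, which is why one must first extend the partial order induced by $(G_Y, f)$ and only then orient the remaining edges consistently with this total order. Everything else is bookkeeping and an appeal to Theorem~\ref{thm:sandwich}.
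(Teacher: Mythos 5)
Your proof is correct and follows essentially the same route as the paper: the same topological-sort extension argument shows that forests are shattered by the acyclic system, the same persistent-directed-cycle argument gives the converse inclusion, and the Sandwich Theorem plus complementation of counts yields both inequalities (the paper packages the converse direction via the co-complement duality of Lemma~\ref{lem:star str and sstr} instead of arguing it directly, but the content is identical, and the paper itself notes that (i) and (ii) are equivalent). The only blemish is in your optional alternative derivation of (i), where Lemma~\ref{lem:star str and sstr} gives $\sstr(\s_{cyc}) = \{Y : G_Y \mbox{ is a forest}\}^* = \{Y : G_{E-Y}\mbox{ has a cycle}\}$ rather than $\{Y : G_{E-Y}\mbox{ is a forest}\}^*$; this slip does not affect the cardinalities and your main derivation of (i) does not rely on it.
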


Note that the two inequalities are equivalent. However, proving them in parallel illustrates a certain duality (symmetry).
To derive these inequalities, we characterize $\sstr(\s_{cyc})$ and $\str(\neg\s_{cyc})$.

\begin{lemma}\label{lem: cyclic, a-cyclic - sstr, str}
Let $G$ be a graph. Then:
\begin{enumpack}
\item{$\{X\subseteq E~:~G_{E-X} \ \mbox{has a cycle} \}=\sstr(\s_{cyc})$.}
\item{$\{X\subseteq E~:~G_{X} \ \mbox{is a forest} \}=\str(\neg\s_{cyc})$.}
\end{enumpack} 
\end{lemma}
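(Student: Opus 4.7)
My plan is to prove part (i) directly from the definition of strong shattering, and then derive part (ii) from part (i) by invoking the duality in Lemma~\ref{lem:star str and sstr}, namely $\str(\neg\s) = \sstr(\s)^*$. Applied to $\s = \s_{cyc}$ and using (i), this gives $X \in \str(\neg\s_{cyc}) \iff E-X \notin \sstr(\s_{cyc}) \iff G_{E-(E-X)} = G_X$ has no cycle, i.e., $G_X$ is a forest. So the whole load falls on (i).

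For the $\supseteq$ direction of (i), I would assume $G_{E-X}$ contains an undirected cycle $C$ and exhibit the required orientation $g \in \{0,1\}^{E-X}$: orient the edges of $C$ cyclically (either cyclic direction works) and orient the remaining edges of $E-X$ arbitrarily. Then for any $f \in \{0,1\}^X$, the merged orientation $g\star f$ still contains the directed cycle $C$, so $g \star f \in S(\s_{cyc})$. Hence $X \in \sstr(\s_{cyc})$.

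For the $\subseteq$ direction I would argue the contrapositive: assume $G_{E-X}$ is a forest and show $X \notin \sstr(\s_{cyc})$. Pick an arbitrary $g \in \{0,1\}^{E-X}$; since a directed cycle in $\vec G_{E-X}^{g}$ would project to an undirected cycle in $G_{E-X}$, the digraph $\vec G_{E-X}^{g}$ is acyclic. I extend its topological order to a total linear order $\prec$ on $V$, and then define $f \in \{0,1\}^{X}$ so that every edge of $X$ is oriented from its $\prec$-smaller endpoint to its $\prec$-larger one. All arcs of $\vec G^{g\star f}$ now respect $\prec$, so $g\star f$ is acyclic and $g\star f \notin S(\s_{cyc})$. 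Since this works for every $g$, strong shattering fails.

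The argument is almost entirely routine; the only subtle point is remembering that ``$g$ is chosen first, then $f$'' in the definition of strong shattering, which is exactly why the topological-ordering trick for the $\subseteq$ direction works: $g$ fixes a DAG on $E-X$, and we then tailor $f$ to that DAG. The main (very mild) obstacle is simply making sure the two directions of (i) are set up cleanly so that the duality step for (ii) is a one-line invocation of Lemma~\ref{lem:star str and sstr}.
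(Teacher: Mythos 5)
Your proposal is correct and uses essentially the same two ingredients as the paper: orienting a cycle of $G_{E-X}$ cyclically so that it survives every extension, and extending a topological order of an oriented forest to orient the remaining edges acyclically. The only (cosmetic) difference is bookkeeping: the paper proves one inclusion each for $\sstr(\s_{cyc})$ and $\str(\neg\s_{cyc})$ and obtains the reverse inclusions by the co-complement duality, whereas you prove both inclusions of (i) directly (your contrapositive step is exactly the paper's forest argument transported through Lemma~\ref{obs:dualStrSstr}) and then deduce (ii) from Lemma~\ref{lem:star str and sstr}.
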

Note that by Lemma~$\ref{obs:dualStrSstr}$ the two statements are equivalent. To establish Lemma~\ref{lem: cyclic, a-cyclic - sstr, str}, we prove:
\begin{lemma}\label{lem: helper cyclic, a-cyclic}
Let $G$ be a graph. Then:
\begin{enumpack}
\item{$\{X\subseteq E~:~G_{E-X} \ \mbox{has a cycle} \}\subseteq\sstr(\s_{cyc})$.}
\item{$\{X\subseteq E~:~G_{X} \ \mbox{is a forest} \}\subseteq\str(\neg\s_{cyc})$.}
\end{enumpack} 
\end{lemma}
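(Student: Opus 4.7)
The plan is to construct explicit witnesses for both inclusions, exploiting the different quantifier structures of $\sstr$ and $\str$.

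For part (i), I would use a cycle in $G_{E-X}$ as a single universal witness. Given $X$ with $G_{E-X}$ containing an undirected cycle $C$, the idea is to define $g \in \{0,1\}^{E-X}$ by orienting the edges of $C$ so that they form a directed cycle $\vec{C}$, and orienting the remaining edges of $E-X-C$ arbitrarily. Since every edge of $C$ lies in $E-X$, its direction in $\vec{G}^{g \mymerge f}$ is fixed by $g$, so $\vec{C}$ survives as a directed cycle in $\vec{G}^{g \mymerge f}$ for every $f \in \{0,1\}^X$. This matches the $\exists g\, \forall f$ pattern and yields $X \in \sstr(\s_{cyc})$.

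For part (ii), the approach is a topological-ordering argument, one per $f$. Given $X$ with $G_X$ a forest and any $f \in \{0,1\}^X$, I note that $\vec{G_X}^{f}$ is automatically a DAG (a forest contains no undirected cycle, hence no directed one), so it admits a linear extension $\prec$ on $V(G)$. I would then set $g \in \{0,1\}^{E-X}$ to orient every edge of $E-X$ from its $\prec$-smaller endpoint to its $\prec$-larger one. Every directed edge in $\vec{G}^{g \mymerge f}$ then respects $\prec$, so the digraph is acyclic and $g \mymerge f \in S(\neg\s_{cyc})$. Since a suitable $g$ is produced for each $f$, this matches the $\forall f\, \exists g$ pattern and yields $X \in \str(\neg\s_{cyc})$.

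I do not anticipate a real obstacle: both constructions are essentially one-liners, and the interesting point is merely how the combinatorial content of \emph{cycle} versus \emph{forest} lines up with the two quantifier orderings. This symmetry is exactly the duality encoded by Lemma~\ref{obs:dualStrSstr}, which is why the two $\subseteq$ inclusions of the helper lemma together suffice to prove both equalities in Lemma~\ref{lem: cyclic, a-cyclic - sstr, str}.
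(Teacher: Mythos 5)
Your proposal is correct and matches the paper's proof essentially verbatim: part (i) fixes a directed cycle inside $E-X$ as the single witness $g$, and part (ii) topologically sorts the DAG $\vec G_X^f$ and orients $E-X$ along the resulting linear order. No gaps.
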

\vspace{-10pt}
\begin{proof}
\textit{(i)} If there exists a cycle $C \subseteq E - X$, then there exists an orientation $d$ of $E - X$, such that $C$ becomes a directed cycle. Clearly, every extension of $d$ to an orientation of $E$ contains this directed cycle. This means that $X \in \sstr(\s_{cyc})$.

\textit{(ii)} Assume that $G_{X}$ is a forest. We need to show that every orientation $d$ of $X$ can be extended to an acyclic orientation of $E$. Since $G_X$ is a forest, $\vec G_{X}^{d}$ is a DAG whose edges form a pre-order $P$ on $V$. Pick (by topological sorting) a linear order $L$ of $V$ that extends $P$ and orient the edges of $E-X$ according to $L$ (from smaller to larger vertex). Clearly, the resulting orientation is acyclic.
\end{proof}

Applying the co-complement operator to both sides of the equations of Lemma~\ref{lem: helper cyclic, a-cyclic} gives:
\begin{lemma}\label{lem:dual}
Let $G$ be a graph. Then:
\begin{enumpack}
\item{$\{X\subseteq E~:~G_{X} \ \mbox{is a forest} \}\supseteq\str(\neg\s_{cyc})$.}
\item{$\{X\subseteq E~:~G_{E-X} \ \mbox{has a cycle} \}\supseteq\sstr(\s_{cyc})$.} ~\hfill\qed
\end{enumpack} 
\end{lemma}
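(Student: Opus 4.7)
The plan is to derive Lemma~\ref{lem:dual} as a purely formal consequence of Lemma~\ref{lem: helper cyclic, a-cyclic}, by applying the co-complement operator $^*$ to the inclusions established there. The whole argument is bookkeeping about how $^*$ interacts with the two relevant set families and with $\str, \sstr$.

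First, I would record the key identity between the two edge-set families that appear on the left-hand sides of Lemma~\ref{lem: helper cyclic, a-cyclic}, namely
\[
\bigl\{X\subseteq E \,:\, G_{E-X}\ \text{has a cycle}\bigr\}^{*} \;=\; \bigl\{X\subseteq E \,:\, G_{X}\ \text{is a forest}\bigr\}.
\]
This is a one-line unwinding of the definitions: $X\in\F^{*}$ iff $E-X\notin\F$, so $X$ lies in the co-complement of $\{Y:G_{E-Y}\text{ has a cycle}\}$ iff $G_{E-(E-X)}=G_{X}$ does not have a cycle, i.e.\ $G_{X}$ is a forest. In particular, these two families are $^{*}$-dual to each other.

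Next, I would use the direction-reversing property $\A\subseteq\B \Leftrightarrow \B^{*}\subseteq\A^{*}$ (stated just after the definition of $^{*}$). Applied to part~(i) of Lemma~\ref{lem: helper cyclic, a-cyclic}, it gives
\[
\sstr(\s_{cyc})^{*} \;\subseteq\; \bigl\{X\subseteq E \,:\, G_{E-X}\ \text{has a cycle}\bigr\}^{*} \;=\; \bigl\{X\subseteq E \,:\, G_{X}\ \text{is a forest}\bigr\}.
\]
By Lemma~\ref{lem:star str and sstr}, $\sstr(\s_{cyc})^{*}=\str(\neg\s_{cyc})$, which is exactly part~(i) of Lemma~\ref{lem:dual}. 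An entirely symmetric application to part~(ii) of Lemma~\ref{lem: helper cyclic, a-cyclic}, using that $\str(\neg\s_{cyc})^{*}=\sstr(\s_{cyc})$, yields part~(ii) of Lemma~\ref{lem:dual}.

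There is essentially no hard step here; the only thing to be careful about is that $^{*}$ is being applied simultaneously to a family of edge-sets (turning ``forest'' into ``complement contains a cycle'') and to the shattered/strongly-shattered families (swapping $\str$ and $\sstr$ and swapping $\s_{cyc}$ with $\neg\s_{cyc}$). Once the identification in the first paragraph is made, both inclusions of Lemma~\ref{lem:dual} drop out by a single application of the contrapositive of $\subseteq$ together with Lemma~\ref{lem:star str and sstr}.
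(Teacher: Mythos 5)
Your proposal is correct and is exactly the paper's argument: the paper derives Lemma~\ref{lem:dual} by ``applying the co-complement operator to both sides'' of Lemma~\ref{lem: helper cyclic, a-cyclic}, which is precisely your combination of the identity $\{X : G_{E-X}\text{ has a cycle}\}^{*}=\{X : G_{X}\text{ is a forest}\}$, the direction-reversing property of $^{*}$, and Lemma~\ref{lem:star str and sstr}. You have simply spelled out the bookkeeping that the paper leaves implicit.
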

Lemma~\ref{lem: helper cyclic, a-cyclic} and Lemma~\ref{lem:dual} together imply Lemma~\ref{lem: cyclic, a-cyclic - sstr, str}. An application of the Sandwich Theorem (Theorem~\ref{thm:sandwich}) yields the inequalities of Theorem \ref{thm: cyclic a-cyclic, sstr str}. Moreover, from the characterizations of $\str(\neg \s_{cyc})$ and $\sstr(\s_{cyc})$ it follows that:  
\begin{proposition}\label{prop1} 
Let $\s_{cyc}$ denote the system of all cyclic orientations of $G$, then: \\
\vspace{-0.2in}
\begin{enumpack}
\item{$\vc(\neg \s_{cyc}) = n-k$, where $k$ is the number of connected components of $G$.}
\item{$\dvc(\s_{cyc}) = m-c$, where $c$ is the size of the smallest cycle in $G$.}
\end{enumpack}
\end{proposition}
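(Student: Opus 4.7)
The plan is to read off both equalities directly from the characterizations of $\str(\neg\s_{cyc})$ and $\sstr(\s_{cyc})$ given in Lemma~\ref{lem: cyclic, a-cyclic - sstr, str}, by recalling that $\vc$ and $\dvc$ are defined as the maximum cardinality of a shattered, respectively strongly shattered, set. Since both $\str$ and $\sstr$ are closed under taking subsets, it suffices in each part to identify the maximum-size element.

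For part (i), Lemma~\ref{lem: cyclic, a-cyclic - sstr, str}(ii) gives
\[
\vc(\neg\s_{cyc}) \;=\; \max\bigl\{\,|X| : X\subseteq E,\ G_X \text{ is a forest}\,\bigr\}.
\]
I would then invoke the standard fact that the largest acyclic edge set in $G$ is a spanning forest, which has exactly $n-k$ edges where $k$ is the number of connected components of $G$.

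For part (ii), Lemma~\ref{lem: cyclic, a-cyclic - sstr, str}(i) gives
\[
\dvc(\s_{cyc}) \;=\; \max\bigl\{\,|X| : X\subseteq E,\ G_{E-X} \text{ has a cycle}\,\bigr\}.
\]
Maximizing $|X|$ is equivalent to minimizing $|E-X|$ subject to $G_{E-X}$ containing a cycle. The minimum size of an edge set in $G$ that still contains a cycle is precisely the girth $c$, realized by taking $E-X$ to be the edges of a shortest cycle. Hence $\dvc(\s_{cyc}) = m-c$.

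No step looks like a genuine obstacle; the content of the proposition is really the translation of two elementary graph-theoretic extremal facts (spanning forest size; shortest cycle) through the already-established characterizations. The only thing worth double-checking is the degenerate case where $G$ has no cycle at all: then $\sstr(\s_{cyc})=\emptyset$ by Lemma~\ref{lem: cyclic, a-cyclic - sstr, str}(i), and by the footnote convention $\dvc(\s_{cyc})=-1$, which matches $m-c$ under the convention $c=m+1$ (or can be handled by a brief separate remark).
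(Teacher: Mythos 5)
Your proposal is correct and follows exactly the route the paper intends: the paper states Proposition~\ref{prop1} as an immediate consequence of the characterizations in Lemma~\ref{lem: cyclic, a-cyclic - sstr, str}, and your filling-in of the two elementary extremal facts (maximal forests are spanning forests with $n-k$ edges; the smallest edge set containing a cycle is a shortest cycle of size $c$) is precisely the omitted verification. Your remark on the degenerate acyclic case is a sensible addition consistent with the paper's footnote convention.
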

An application of Theorem~\ref{thm:Sauer lemma} and a standard bound on binomial sums yield:
\begin{proposition}\label{prop2}
The number of acyclic orientations of $G$ is at most ${\left( \frac{m \, \euler}{n-k} \right)}^{n-k}$ where $k$ is the number of connected components of $G$.
\end{proposition}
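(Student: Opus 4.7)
The plan is to recognize that the number of acyclic orientations of $G$ equals $|\neg\s_{cyc}|$, so the statement is just an application of Theorem~\ref{thm:Sauer lemma} together with the VC-dimension bound supplied by Proposition~\ref{prop1}(i), followed by a standard estimate on a partial sum of binomial coefficients.

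First I would note that the system $\neg\s_{cyc}$ has ground set $E(G)$, so $|\dim(\neg\s_{cyc})| = m$. By Proposition~\ref{prop1}(i) we have $\vc(\neg\s_{cyc}) = n - k$, where $k$ is the number of connected components of $G$. Substituting into the Sauer-Shelah Lemma (Theorem~\ref{thm:Sauer lemma}) yields
\[
\lvert\neg\s_{cyc}\rvert \;\leq\; \sum_{i=0}^{n-k} \binom{m}{i}.
\]

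Next I would invoke the standard inequality $\sum_{i=0}^{d}\binom{m}{i} \leq \left(\frac{\euler\,m}{d}\right)^d$, which is valid for $1 \leq d \leq m$ and follows from the well-known computation that bounds the partial binomial sum by $(m/d)^d$ times $\sum_{i\geq 0} d^i/i! \leq \euler^d$. Applying this with $d = n-k$ (assuming $G$ has at least one edge, so $n-k \geq 1$; the edge-free case is trivial since then there is a single, vacuously acyclic orientation and the bound is also trivial) gives
\[
\lvert\neg\s_{cyc}\rvert \;\leq\; \left(\frac{m\,\euler}{n-k}\right)^{n-k},
\]
which is exactly the claimed inequality.

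There is essentially no obstacle here: Proposition~\ref{prop1}(i) does all the graph-theoretic work by pinning down the VC-dimension, and the rest is bookkeeping plus a textbook estimate on binomial sums. The only mildly delicate point is handling degenerate cases (graphs with no edges, or with $n = k$), but in all of these the bound reduces to something obviously true, so they can be dispatched in a line.
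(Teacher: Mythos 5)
Your proposal is correct and follows exactly the paper's intended argument: apply the Sauer--Shelah Lemma (Theorem~\ref{thm:Sauer lemma}) to $\neg\s_{cyc}$ using $\vc(\neg\s_{cyc})=n-k$ from Proposition~\ref{prop1}(i), then invoke the standard bound $\sum_{i=0}^{d}\binom{m}{i}\leq(\euler m/d)^{d}$. The handling of the degenerate cases is a reasonable extra precaution but does not change the substance.
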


In general, the inequalities of Theorem~\ref{thm: cyclic a-cyclic, sstr str} are strict. In fact, as implied by Theorem~\ref{thm:lopsidedness} presented in \S\,\ref{sec4}, these inequalities are strict if and only if $G$ contains a cycle. Note that the second statement of Theorem~\ref{thm: cyclic a-cyclic, sstr str} appears to be known (it is implied by an identity of Bernardi~\cite{bernardi}). Note as well that the number of acyclic orientations and the number of subgraphs that are forests are two particular values of the Tutte-polynomial~\cite{bollobas_book}.  Aharoni and Holzman~\cite{aharoni_holzman} have brought to our attention that the result can also be proven by induction, using the graph operations of \emph{deletion} and \emph{contraction} of edges. 

A natural question is whether $\str(\s_{cyc})$ and $\sstr(\neg \s_{cyc})$ can be similarly described. The characterization of these sets, however, seems to be less natural. Nevertheless, we can observe the following:

\begin{lemma}\label{lem:bridges}
Let $G$ be a graph. Then:
\begin{enumpack}
\item{$ \{X\subseteq E~:~E-X\ \mbox{intersects a cycle of}\ G\} \supseteq \str(\s_{cyc})$.}
\item{$ \{X\subseteq E~:~X\ \mbox{contains only bridges}\} \subseteq \sstr(\neg\s_{cyc})$.}
\end{enumpack} 
\end{lemma}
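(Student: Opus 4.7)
The plan is to prove (i) directly by contrapositive and then obtain (ii) from it through the duality supplied by Lemma~\ref{lem:star str and sstr}. For the duality step, recall that $\sstr(\neg\s_{cyc}) = \str(\s_{cyc})^*$; unpacking the co-complement gives
\[
\{X \subseteq E : E-X \text{ intersects a cycle of } G\}^* = \{Y \subseteq E : Y \text{ contains only bridges}\},
\]
since $Y$ lies in the right-hand side iff $E-Y$ is not in the family on the left iff $Y$ meets no cycle iff every edge of $Y$ is a bridge. Because $^*$ reverses inclusion, (i) directly yields (ii).

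For (i), I argue the contrapositive: assume no edge of $E-X$ lies on a cycle of $G$, equivalently every edge of $E-X$ is a bridge of $G$, and I will exhibit an orientation $f \in \{0,1\}^X$ that cannot be extended to an orientation with a directed cycle, showing $X \notin \str(\s_{cyc})$. The key observation is that a cycle of $G$ never uses a bridge, so under our assumption every cycle of $G$ is contained in $G_X$.

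Concretely, pick a linear order on $V(G)$ and let $f$ be the orientation of $G_X$ that directs each edge from its smaller to its larger endpoint; this $f$ is acyclic. For any extension $g \in \{0,1\}^{E-X}$, a directed cycle in $\vec G^{g \star f}$ would be a cycle of $G$, hence lie entirely in $X$, hence be a directed cycle of $\vec G_X^f$---contradicting acyclicity of $f$. Thus $X \notin \str(\s_{cyc})$, establishing (i).

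There is no real obstacle here; the only point to notice is the right choice of $f$: once $E-X$ has been reduced to bridges, the orientation $g$ on $E-X$ cannot participate in any cycle of $G$ at all, so acyclicity of $f$ on $G_X$ alone suffices to rule out directed cycles in the full orientation. This is analogous in spirit to the argument in Lemma~\ref{lem: helper cyclic, a-cyclic}\emph{(ii)}, and the fact that we only get inclusions rather than equalities reflects the observation already signaled in the text that the sets $\str(\s_{cyc})$ and $\sstr(\neg\s_{cyc})$ lack a fully natural characterization.
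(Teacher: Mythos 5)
Your proof is correct and is essentially the paper's argument in mirror image: the paper proves \textit{(ii)} directly (an acyclic orientation of $E-X$ stays acyclic under every extension when $X$ consists of bridges) and obtains \textit{(i)} via Lemma~\ref{lem:star str and sstr}, whereas you prove the contrapositive of \textit{(i)} --- which, after unwinding the quantifier negation, is the very same acyclic-extension argument with the roles of $X$ and $E-X$ swapped --- and then transfer it to \textit{(ii)} by the same co-complement duality. The core idea (every cycle of $G$ avoids bridges, so acyclicity on the complementary part is preserved by all extensions) is identical, and your unpacking of the $^*$ operator is accurate.
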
 
\vspace{-10pt}
\begin{proof}
By Lemma~\ref{lem:star str and sstr} and the monotonicity of the co-complement operator, the two statements are equivalent. Thus, it is sufficient to prove \textit{(ii)}:
Let $d$ be an acyclic orientation of $E-X$. If $X$ contains only bridges, every extension of $d$ to $E$ remains acyclic (by definition, a bridge is not contained in any cycle). Therefore, $X$ is strongly shattered by $\neg \s_{cyc}$.
\end{proof}

Similarly to the proof of Theorem \ref{thm: cyclic a-cyclic, sstr str}, Lemma~\ref{lem:bridges}, together with the Sandwich Theorem yield certain inequalities. One can also derive an upper bound for $\vc(\s_{cyc})$ and a lower bound for $\dvc(\neg \s_{cyc})$. 

\subsection{Strong orientations}\label{sec32}
Let $k\in\mathbb{N}$. A graph $G$ is $k$-\emph{edge-connected} if it remains connected whenever fewer than $k$ edges are removed. A digraph $\vec G$ is $k$-\emph{arc-strong} if for every $u,v\in V(\vec G)$ there exist $k$ edge-disjoint paths from $u$ to $v$. Since in this section we only refer to edge-connectivity, we use the shorter terms $k$-\emph{connected}, $k$-\emph{strong}, and \emph{disjoint}.

\begin{theorem}\label{thm:strong}
For an arbitrary graph $G$:

\begin{tabular}{M{.25}M{.005}M{.29}M{.005}M{.288}}
the number of $2k$-connected subgraphs of $G$ & $ \leq $ & the number of $k$-strong orientations of $G$ & $ \leq $ & the number of $k$-connected subgraphs of $G$.
\end{tabular}

\end{theorem}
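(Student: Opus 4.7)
The plan is to mirror the argument of \S\,\ref{sec31}. Let $\s$ denote the system of $k$-strong orientations of $G$, that is,
\[
\s = \sys{\setdef{d\in\O(G)}{\vec G^d\text{ is $k$-arc-strong}}}{\O(G)}.
\]
By the Sandwich Theorem (Theorem~\ref{thm:sandwich}), $\lvert\sstr(\s)\rvert \le \lvert\s\rvert \le \lvert\str(\s)\rvert$, so it suffices to bound the left and right quantities by the numbers of $2k$- and $k$-edge-connected subgraphs of $G$, respectively. Via the bijection $Y\leftrightarrow E-Y$ on $\powset(E)$, both bounds reduce to characterizing (strongly) shattered sets in terms of the edge-connectivity of $G_Y$ or $G_{E-Y}$.

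For the upper bound, I would establish the containment
\[
\str(\s) \;\subseteq\; \setdef{Y\subseteq E}{G_{E-Y}\text{ is $k$-edge-connected}}.
\]
The proof is by contrapositive. If $G_{E-Y}$ admits an edge cut $(S, V\setminus S)$ of size strictly less than $k$, orient every edge of $Y$ that crosses this cut from $V\setminus S$ into $S$ (orient the remaining $Y$-edges arbitrarily). This defines an $f\in\{0,1\}^Y$ such that, in every extension $g\star f$, the vertex set $S$ has fewer than $k$ out-arcs, so no extension is $k$-arc-strong; hence $Y\notin\str(\s)$. Composing with $Y\mapsto E-Y$, the Sandwich Theorem yields $|\s|\le\lvert\str(\s)\rvert \le$ the number of $k$-connected subgraphs of $G$.

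For the lower bound, I would establish the dual-flavor containment
\[
\setdef{E-Y}{G_Y\text{ is $2k$-edge-connected}} \;\subseteq\; \sstr(\s).
\]
When $G_Y$ is $2k$-edge-connected, the Nash--Williams orientation theorem~\cite{nash-williams} supplies a $k$-arc-strong orientation $g\in\{0,1\}^Y$ of $G_Y$. For any $f\in\{0,1\}^{E-Y}$, the digraph $\vec G^{g\star f}$ contains $\vec G_Y^{g}$ as a spanning subdigraph, and $k$-arc-strong connectivity is preserved under the addition of arcs, so $g\star f\in S(\s)$. Hence $E-Y$ is strongly shattered, and the Sandwich Theorem yields $|\s|\ge\lvert\sstr(\s)\rvert \ge$ the number of $2k$-connected subgraphs of $G$.

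The main obstacle is the lower bound, which is essentially a black-box application of Nash--Williams' (non-trivial) orientation theorem; the upper bound, by contrast, is a self-contained cut argument. Unlike the tight pair of dual characterizations obtained in Lemma~\ref{lem: cyclic, a-cyclic - sstr, str}, the two containments here are not exact reverses of one another: the factor-of-$2$ gap between $k$- and $2k$-edge-connectivity mirrors the gap in Nash--Williams' theorem and appears intrinsic to this approach.
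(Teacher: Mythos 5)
Your proposal is correct and takes essentially the same route as the paper: the lower bound applies Nash--Williams to a $2k$-edge-connected subgraph and uses monotonicity of $k$-arc-strong connectivity to conclude strong shattering, and the upper bound orients the shattered edges of a cut backward to force at least $k$ forward edges into $E-X$. The only cosmetic difference is that you phrase the upper-bound containment contrapositively, whereas the paper argues it directly via Menger's theorem; the underlying cut argument is identical.
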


Before proving this theorem, we state two well-known results. The first is a characterization of graphs that admit a $k$-strong orientation. The second is an immediate consequence of Menger's theorem for directed graphs.

\begin{theorem}[Nash-Williams~\cite{nash-williams}]\label{thm:nash-williams}
A graph $G$ has a $k$-strong orientation iff $G$ is $2k$-connected.~\hfill\qed
\end{theorem}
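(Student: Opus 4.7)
My plan is to prove both directions separately, with the easy ``only if'' direction going through Menger's theorem and the hard ``if'' direction going through the splitting-off method.

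For the \emph{necessity} direction, suppose $G$ admits a $k$-strong orientation $\vec G^d$. Fix any nonempty proper $S\subsetneq V(G)$, and choose $s\in S$, $t\in V-S$. By the definition of $k$-arc-strong (equivalently, by Menger's theorem for digraphs), there are $k$ arc-disjoint directed $s$-$t$ paths in $\vec G^d$, and each such path must cross the directed cut $\partial^+(S)$ at least once, so $|\partial^+(S)|\geq k$. Applied symmetrically with $s\in V-S$ and $t\in S$, we also get $|\partial^-(S)|\geq k$. The undirected edge cut $\partial_G(S)$ therefore contains at least $2k$ edges. Since $S$ was arbitrary, $G$ is $2k$-edge-connected.

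For the \emph{sufficiency} direction, I would argue by induction on $|E(G)|+|V(G)|$, using Mader's splitting-off theorem as the key black-box tool. The base case is trivial (a single vertex, or the edgeless graph). For the induction step, pick a vertex $v$; since $G$ is $2k$-edge-connected, $\deg(v)\geq 2k$. First reduce to the even-degree case (if $\deg(v)$ is odd, pair $v$ with another odd-degree vertex by a temporary auxiliary edge, or handle parity locally at the end). Then invoke Mader's splitting-off theorem, which guarantees a \emph{complete admissible splitting} at $v$: a partition of the edges incident to $v$ into pairs $(u_i v,\,v w_i)$ such that the graph $G'$ obtained by deleting $v$ and adding the edges $u_i w_i$ is still $2k$-edge-connected. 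By the induction hypothesis applied to $G'$, there is a $k$-strong orientation $\vec{G'}$. Lift this to an orientation $\vec G^d$ of $G$ as follows: if the edge $u_i w_i$ is directed $u_i \to w_i$ in $\vec{G'}$, then orient the pair as $u_i\to v\to w_i$ in $\vec G^d$; otherwise orient it $w_i\to v\to u_i$. Every directed path in $\vec{G'}$ lifts to a directed walk in $\vec G^d$ that uses each of the two arcs at $v$ consistently, so $k$ arc-disjoint $x$-$y$ paths in $\vec{G'}$ between any $x,y\in V(G')$ give $k$ arc-disjoint $x$-$y$ paths in $\vec G^d$. The splitting also balances in- and out-degree at $v$ by construction, so cuts separating $v$ from the rest retain $\geq k$ arcs in each direction. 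Hence $\vec G^d$ is $k$-strong.

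The main obstacle is of course Mader's splitting-off theorem itself, whose proof is a nontrivial submodular/uncrossing argument; in this write-up I would cite it as a lemma. A secondary technical point is the parity reduction for odd-degree vertices, which requires a careful auxiliary construction (e.g.\ adding a phantom edge between two odd-degree vertices, carrying out the splitting and induction, then removing the phantom edge and showing the induced orientation still satisfies the $k$-arc-strong property). Both of these are standard in the connectivity literature, so the overall structure is an induction that peels off one vertex at a time via splitting, sandwiched between the easy direction (above) and Mader's theorem as the engine.
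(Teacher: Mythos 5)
The paper does not prove this statement at all: it is quoted as Nash-Williams' orientation theorem (with the \qed indicating a cited black-box result), and it is then used as an ingredient in the proof of Theorem~\ref{thm:strong}. So the only question is whether your proposed proof stands on its own. Your necessity direction is correct and standard: counting forward and backward arcs across each cut via Menger gives $2k$ undirected edges per cut. Your sufficiency direction follows the standard modern route (induction plus splitting-off, due to Lov\'asz/Mader, rather than Nash-Williams' original odd-vertex-pairing argument), and the lifting of the orientation through the split vertex is fine.

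The genuine gap is the parity step, which is in fact the crux of this theorem. A \emph{complete} admissible splitting at $v$ (Lov\'asz's theorem, or repeated application of Mader's theorem until $v$ is isolated) requires $\deg(v)$ to be even; Mader's single-pair version also breaks down at degree $3$, so an odd-degree vertex cannot simply be split away. Your proposed fix --- add a phantom edge between $v$ and another odd-degree vertex, orient, then delete it --- does not work as stated: deleting one arc from a $k$-arc-strong digraph only guarantees $(k-1)$-arc-strongness, and cuts of $G$ containing the phantom edge may indeed end up with only $k-1$ forward arcs, since the $k$-strong orientation of $G+uv$ is allowed to ``spend'' the phantom arc on such a cut. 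The standard repair is different: first reduce to the case where $G$ is \emph{minimally} $2k$-edge-connected (if some edge $e$ can be deleted keeping $2k$-edge-connectivity, orient $G-e$ by induction on the number of edges and orient $e$ arbitrarily, since adding arcs preserves $k$-arc-strongness); a minimally $2k$-edge-connected graph has a vertex of degree exactly $2k$ (Mader/Lick), which is even, and there the complete splitting plus your lifting argument goes through. Alternatively one must invoke Nash-Williams' odd-vertex pairing theorem, which is essentially as hard as the theorem itself. Without one of these ingredients the induction does not close.
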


\begin{theorem}[Menger~\cite{bollobas_book}]\label{stronglemma}
A digraph $\vec G$ is $k$-strong iff every non-trivial cut of $\vec G$ contains at least $k$ forward edges. ~\hfill\qed
\end{theorem}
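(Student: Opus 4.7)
The plan is to derive this characterization from the classical edge-version of Menger's theorem for directed graphs (for a single source-sink pair), which asserts that for any two vertices $u,v$ in a digraph, the maximum number of pairwise edge-disjoint directed $u$-$v$ paths equals the minimum number of edges whose removal destroys every directed $u$-$v$ path. I would treat this single-pair statement as the black-box input, and lift it to a quantification over all pairs via the hypothesis ``every non-trivial cut''.

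For the forward direction, suppose $\vec G$ is $k$-strong and let $(S,\bar S)$ be any non-trivial cut, so $S,\bar S\neq\emptyset$. Pick any $u\in S$ and $v\in\bar S$. By $k$-strongness there are $k$ edge-disjoint directed paths from $u$ to $v$; each such path must leave $S$ at some point along a forward edge of the cut, and since the paths are edge-disjoint these exit edges are distinct. Hence the cut contains at least $k$ forward edges.

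For the reverse direction, assume every non-trivial cut of $\vec G$ has at least $k$ forward edges, and fix arbitrary distinct $u,v\in V(\vec G)$. Let $F\subseteq E(\vec G)$ be a minimum-size edge set whose removal leaves no directed $u$-$v$ path, and define $S$ to be the set of vertices reachable from $u$ in $\vec G-F$. Then $u\in S$ and $v\notin S$, giving a non-trivial cut $(S,\bar S)$. By minimality of $F$, each edge of $F$ must go from $S$ to $\bar S$ (otherwise deleting it would be unnecessary, because its removal cannot disconnect $v$ from $u$). Thus $F$ is exactly the forward-edge set of $(S,\bar S)$, so by hypothesis $\lvert F\rvert\geq k$. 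Menger's theorem then yields $k$ edge-disjoint $u$-$v$ paths, and since $u,v$ were arbitrary, $\vec G$ is $k$-strong.

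The only subtle step is in the reverse direction: realizing a minimum edge-disconnecting set as the forward-edge set of a genuine vertex-partition cut. This is what allows the partition-based hypothesis to be plugged into the edge-set form of Menger's theorem; the reachability construction above accomplishes it in one line, but conceptually it is the bridge between ``combinatorial cut'' (an edge set) and ``structural cut'' (a vertex partition) that makes the equivalence immediate.
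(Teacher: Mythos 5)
Your proof is correct and follows exactly the route the paper intends: the paper states this result with no argument, as an immediate consequence of the edge version of Menger's theorem for digraphs, and your write-up simply supplies the standard details of that derivation (edge-disjoint paths each crossing the cut on a distinct forward edge in one direction; realizing a minimum $u$-$v$ disconnecting edge set as the forward-edge set of the reachability cut in the other). Both steps, including the reachability construction you flag as the subtle one, are sound.
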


\begin{proof}[of Theorem~\ref{thm:strong}]
For $i\in\mathbb{N}$, let $\s_{i}$ denote the system of $i$-strong orientations of $G$ and let $\F_i$ denote the family of sets $X \subseteq E$, such that $G_X$ is $i$-connected.

\noindent $\lvert \F_{2k}\rvert\leq \lvert\s_{k}\rvert$: \quad
By the Sandwich Theorem, $\lvert \sstr(\s_k) \rvert\leq\lvert \s_k \rvert$. Thus, it is sufficient to show that
$$\{X\subseteq E~:~G_{E-X}\ \mbox{is $2k$-connected}\}\subseteq\sstr(\s_{k}).$$
Indeed, if $G_{E-X}$ is $2k$-connected, then by Theorem~\ref{thm:nash-williams}, there exists an orientation $d$ of $E-X$, such that $\vec G^d$ is $k$-strong. Clearly, all extensions of $d$ to an orientation of $E$ maintain the $k$-strong property, and thus $X\in\sstr(\s_{k})$.

\noindent $\lvert\s_{k}\rvert\leq\lvert \F_{k}\rvert $:\quad
By the Sandwich Theorem, $\lvert \s_k \rvert\leq\lvert \str(\s_k) \rvert$. Thus, it is sufficient to show that
$$\str(\s_{k})\subseteq\{X\subseteq E~:~G_{E-X}\ \mbox{is $k$-connected}\}.$$
Let $X\in\str(\s_{k})$. It is enough to show that every non-trivial cut of $G$ contains at least $k$ edges in $E-X$. Let $(V', V'')$ be a non-trivial cut of $G$. Pick an orientation $d$ of $X$ that directs every edge included in the cut as a \emph{backward} edge. Since $X\in\str(\s_{k})$, there exists a $k$-strong orientation $f$ of $E$ which extends $d$. By Theorem~\ref{stronglemma}, the cut $(V',V'')$ must contain at least $k$ \emph{forward} edges in $\vec G^f$ and by the choice of $d$, all of these edges are from $E-X$.
\end{proof}
Remark: After discovering the above result, it was brought to our attention that Mcdiarmid~\cite{mcdiarmid81} has proved a similar result using non-trivial tools from clutter percolation. 

In general, the inequalities of Theorem~\ref{thm:strong} are strict. In fact, as implied by Theorem~\ref{thm:lopsidedness} presented in \S\,\ref{sec4}, both of these inequalities are strict if and only if $G$ is $2k$-connected.

Let $c_k(G)$ denote the size of the \emph{minimum $k$-connected subgraph} of $G$. Computing $c_k$ is a known NP-hard problem~\cite{garey-johnson}, even for $k=2$. From the proof of Theorem~\ref{thm:strong} and from Theorem~\ref{thm:Sauer lemma} we obtain:

\begin{proposition}\label{prop3} If $\s_{k}$ is the system of all $k$-connected orientations of $G$, then: 
$$ m-c_{2k} \leq \dvc(\s_{k}) \leq \vc(\s_{k}) \leq m-c_k.$$
\end{proposition}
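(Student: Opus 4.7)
The plan is to read off both bounds from the two set-inclusions already established inside the proof of Theorem~\ref{thm:strong}, and then to maximize cardinality on each side. The middle inequality $\dvc(\s_{k}) \leq \vc(\s_{k})$ needs no extra work, since $\sstr(\s)\subseteq \str(\s)$ for every system (noted in \S\ref{subsec22}).

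Concretely, the two inclusions already in hand are
\begin{align*}
\{X \subseteq E : G_{E-X} \text{ is } 2k\text{-connected}\} &\subseteq \sstr(\s_{k}),\\
\str(\s_{k}) &\subseteq \{X \subseteq E : G_{E-X} \text{ is } k\text{-connected}\},
\end{align*}
the first proved via Theorem~\ref{thm:nash-williams}, the second via Theorem~\ref{stronglemma} (Menger). For the lower bound $\dvc(\s_{k})\geq m-c_{2k}$ the plan is to exhibit a strongly-shattered set of that size: pick a minimum $2k$-connected subgraph $H$ of $G$, so that $|E(H)|=c_{2k}$, and set $X_{0}=E\setminus E(H)$. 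Then $G_{E-X_{0}}=H$ is $2k$-connected, the first inclusion puts $X_{0}\in\sstr(\s_{k})$, and $|X_{0}|=m-c_{2k}$ gives the bound. For the upper bound $\vc(\s_{k})\leq m-c_{k}$ the plan is dual: any $X\in\str(\s_{k})$ forces $G_{E-X}$ to be $k$-connected by the second inclusion, hence $|E-X|\geq c_{k}$, i.e.\ $|X|\leq m-c_{k}$; taking the maximum over $X\in\str(\s_{k})$ yields the bound.

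I do not expect a real obstacle here; the only subtle point is ensuring that the quantities are well defined. The lower bound is meaningful only when a $2k$-connected subgraph exists, but in that regime $\s_{k}$ is non-empty by Theorem~\ref{thm:nash-williams}, and the upper bound is likewise meaningful only when a $k$-connected subgraph exists, which is a weaker condition. Thus Proposition~\ref{prop3} is really an arithmetic restatement of the inclusions established in Theorem~\ref{thm:strong}, combined with the trivial fact $\sstr\subseteq\str$.
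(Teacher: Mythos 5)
Your proposal is correct and matches the paper's intended argument: the paper derives Proposition~\ref{prop3} directly from the two inclusions established in the proof of Theorem~\ref{thm:strong}, exactly as you do, with the middle inequality following from $\sstr(\s)\subseteq\str(\s)$. Your remark on well-definedness of $c_{2k}$ and $c_k$ is a reasonable extra precaution that the paper leaves implicit.
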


\begin{proposition}\label{prop4}
The number of $k$-connected orientations of $G$ is at most ${\left( \frac{m \, \euler}{m-c_k} \right)}^{m-c_k}$.
\end{proposition}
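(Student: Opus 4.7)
The plan is to combine Proposition~\ref{prop3} with the Sauer-Shelah Lemma (Theorem~\ref{thm:Sauer lemma}) and then convert the resulting binomial sum into the closed-form upper bound stated. Since $\dim(\s_k) = E$, we have $|\dim(\s_k)| = m$, so applying the Sauer-Shelah Lemma gives
\[
|\s_k| \;\leq\; \sum_{i=0}^{\vc(\s_k)} \binom{m}{i}.
\]
From Proposition~\ref{prop3} we know $\vc(\s_k) \leq m - c_k$, and since the binomial sum is monotone in the upper index, we may replace $\vc(\s_k)$ by $m - c_k$ without weakening the bound:
\[
|\s_k| \;\leq\; \sum_{i=0}^{m-c_k} \binom{m}{i}.
\]

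The remaining task is the ``standard bound on binomial sums'' already invoked in the proof of Proposition~\ref{prop2}, namely the inequality $\sum_{i=0}^{d} \binom{m}{i} \leq \bigl(\tfrac{m\,\euler}{d}\bigr)^{d}$, valid for $1 \leq d \leq m$. Plugging in $d = m - c_k$ immediately yields $|\s_k| \leq \bigl(\tfrac{m\,\euler}{m-c_k}\bigr)^{m-c_k}$, which is the claimed inequality. There is no real obstacle here, as Proposition~\ref{prop3} has already done the combinatorial work of bounding the VC-dimension; the only thing to watch is the degenerate case $c_k = m$ (the whole graph is a minimum $k$-connected subgraph), in which $m - c_k = 0$, the binomial sum collapses to $1$, and the closed-form expression should be interpreted in the limit as $1$. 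So the proof is essentially a two-line derivation once Proposition~\ref{prop3} is in hand.
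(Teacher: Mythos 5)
Your proof is correct and follows exactly the route the paper intends: Proposition~\ref{prop3} gives $\vc(\s_k)\leq m-c_k$, the Sauer-Shelah Lemma (Theorem~\ref{thm:Sauer lemma}) bounds $|\s_k|$ by the binomial sum, and the standard estimate $\sum_{i=0}^{d}\binom{m}{i}\leq\bigl(\tfrac{m\,\euler}{d}\bigr)^{d}$ finishes the argument, just as in Proposition~\ref{prop2}. Your remark about the degenerate case $m-c_k=0$ is a reasonable extra precaution but changes nothing of substance.
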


Observe that a simple corollary of Theorem~\ref{thm:strong}, together with classical results on the connectivity of random graphs in the $G(n,p)$ model~\cite{Walsh} is that almost every tournament on $n$ vertices is $k$-strong, for any fixed positive $k$.

\subsection{General inequality}

We can abstract away parts of the earlier proofs, to obtain the following result:
  
\begin{theorem}[General inequality]\label{thm:meta-ineq}
Let $P$ be a monotone increasing property of digraphs, and $P'$ be a property of graphs, such that if a graph $G$ satisfies $P'$, then there exists an orientation of $G$ that satisfies $P$. Then, for an arbitrary graph $G$:

\begin{tabular}{M{0.35}M{.015}M{0.35}}
the number of subgraphs of $G$ that satisfy $P'$ & $\leq$ & the number of orientations of $G$ that satisfy $P$.
\end{tabular}
\end{theorem}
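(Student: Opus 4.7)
The plan is to follow the template set by the proofs of Theorem~\ref{thm: cyclic a-cyclic, sstr str} and Theorem~\ref{thm:strong}. Let $\s_P$ denote the system of orientations of $G$ that satisfy $P$, i.e.
\[
\s_P \defeq \systemexp{\setdef{d\in\O(G)}{\vec G^d \text{ satisfies } P}}{\O(G)}.
\]
The number of orientations of $G$ satisfying $P$ is $\lvert\s_P\rvert$. Since $X\mapsto E-X$ is a bijection on subsets of $E$, the number of subgraphs of $G$ satisfying $P'$ equals $\lvert \setdef{X\subseteq E}{G_{E-X} \text{ satisfies } P'}\rvert$. By the Sandwich Theorem (Theorem~\ref{thm:sandwich}), $\lvert\sstr(\s_P)\rvert \leq \lvert\s_P\rvert$, so it suffices to establish the inclusion
\[
\setdef{X\subseteq E}{G_{E-X} \text{ satisfies } P'} \ \subseteq \ \sstr(\s_P).
\]

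To prove this inclusion, fix $X\subseteq E$ with $G_{E-X}$ satisfying $P'$. By the hypothesis relating $P'$ and $P$, there exists an orientation $d_0 \in \{0,1\}^{E-X}$ of $G_{E-X}$ such that $\vec{G}_{E-X}^{\,d_0}$ satisfies $P$. For any $f\in\{0,1\}^X$, the digraph $\vec G^{d_0\mymerge f}$ is obtained from $\vec{G}_{E-X}^{\,d_0}$ by adding the arcs corresponding to $f$ on the edges of $X$ (the vertex set is the same, namely $V(G)$). Because $P$ is a monotone increasing property of digraphs, adding arcs preserves $P$, and hence $\vec G^{d_0\mymerge f}$ satisfies $P$. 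Thus $d_0\mymerge f \in S(\s_P)$ for every $f\in\{0,1\}^X$, which is exactly the definition of $X$ being strongly shattered by $\s_P$.

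Putting the pieces together yields
\[
\#\{\text{subgraphs of } G \text{ satisfying } P'\} \ \leq \ \lvert\sstr(\s_P)\rvert \ \leq \ \lvert\s_P\rvert \ = \ \#\{\text{orientations of } G \text{ satisfying } P\},
\]
which is the claimed inequality. The only nontrivial step is the extension argument in the inclusion above; this is where both the hypothesis on $P'$ (to produce a witnessing orientation on $E-X$) and the monotonicity of $P$ (to extend arbitrarily to $X$) are used, but both were already in place in the concrete instances (cycles, $k$-edge-connectivity) and here they are simply abstracted. No other obstacle arises.
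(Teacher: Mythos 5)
Your proof is correct and follows essentially the same route as the paper's: define $\s_P$, show $\setdef{X\subseteq E}{P'(G_{E-X})}\subseteq \sstr(\s_P)$ using the witnessing orientation on $E-X$ plus monotonicity of $P$, and conclude via the Sandwich Theorem. The extra details you supply (the bijection $X\mapsto E-X$ and the explicit $\mymerge$ notation) are fine but only make explicit what the paper leaves implicit.
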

\begin{proof}
Let $\s_P$ be the system of orientations of $G$ that satisfy $P$. Let $X \subseteq E$, such that $P'(G_{E-X})$ holds. From the conditions it follows that there exists an orientation $d$ of $G_{E-X}$, such that $P(\vec G_{E-X}^d)$ holds. Since $P$ is monotone increasing, $P$ holds for any extension of $d$ to $E$. It follows that $X \in \sstr(\s_P)$. Hence, $\setdef{X \subseteq E}{P'(G_{E-X})} \subseteq \sstr(\s_P)$, and Theorem~\ref{thm:sandwich} yields the result.
\end{proof}

\subsection{Further inequalities}

The conditions of Theorem~\ref{thm:meta-ineq} are fulfilled by many natural connectivity-properties of digraphs. These include $s $ - $ t$ connectivity, rootedness, unilateral connectivity, or the existence of a Hamiltonian cycle.

As a further application, consider the following problem: Given a graph $G$ and a digraph $\vec{H}$, denote by $D(G,\vec{H})$ the number of
orientations of G not containing a copy of $\vec{H}$. Erd\H{o}s~\cite{erdosopen} posed the question of estimating $D(G,\vec{H})$, and researchers have studied many variants of this problem~\cite{erdos86,alon06,kmp2011}. Let $H$ denote the undirected graph that underlies $\vec{H}$ and let $D'(G,H)$ denote the number of subgraphs of $G$ not containing a copy of $H$. Then, similarly to the preceeding results, we obtain:
\begin{theorem}\label{thm:forbidden orientation}
$D(G,\vec{H})\leq D'(G,H)$. ~\hfill\qed
\end{theorem}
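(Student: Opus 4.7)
The plan is to apply the Sandwich Theorem directly to the system of orientations of $G$ that avoid $\vec H$, in the same spirit as the proofs of Theorem~\ref{thm: cyclic a-cyclic, sstr str} and Theorem~\ref{thm:strong}. Define $\s_{\vec H}$ to be the system of orientations $d \in \O(G)$ such that $\vec G^d$ contains no copy of $\vec H$, so that $|\s_{\vec H}| = D(G,\vec H)$ by definition. By Theorem~\ref{thm:sandwich}, $|\s_{\vec H}| \leq |\str(\s_{\vec H})|$, so it suffices to establish the containment
\[
\str(\s_{\vec H}) \subseteq \{X \subseteq E : G_X \text{ contains no copy of } H\},
\]
since the cardinality of the right-hand side is exactly $D'(G,H)$.

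To prove this containment, I would argue by contraposition. Suppose $G_X$ contains a copy of $H$, realized on some edge subset $F \subseteq X$. Because $H$ is precisely the graph underlying $\vec H$, we may choose an orientation $d_0 \in \{0,1\}^F$ of $F$ that turns the copy of $H$ into a copy of $\vec H$. Extend $d_0$ arbitrarily to some orientation $f \in \{0,1\}^X$. Then for every $g \in \{0,1\}^{E-X}$, the combined orientation $g \mymerge f \in \O(G)$ still realizes the same copy of $\vec H$ on $F$, and therefore $g \mymerge f \notin S(\s_{\vec H})$. By the definition of shattering, no extension of $f$ lies in $S(\s_{\vec H})$, so $X \notin \str(\s_{\vec H})$, which is the contrapositive of what we wanted.

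Chaining the two inequalities then gives $D(G,\vec H) = |\s_{\vec H}| \leq |\str(\s_{\vec H})| \leq D'(G,H)$, completing the argument. I do not expect a serious obstacle in executing this plan; the shape of the proof mirrors the upper-bound halves of Theorem~\ref{thm: cyclic a-cyclic, sstr str} and Theorem~\ref{thm:strong}, with the monotonicity of ``contains a copy of $\vec H$'' under extending an orientation playing the role previously filled by directed cycles or by forward cut edges. The only point requiring any care is the orientation step realizing $\vec H$ on $F$, which is immediate from the hypothesis that $H$ is exactly the underlying graph of $\vec H$.
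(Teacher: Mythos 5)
Your proof is correct and matches the paper's intended argument: the paper obtains this theorem by applying the General Inequality (Theorem~\ref{thm:meta-ineq}) to the monotone increasing property ``contains a copy of $\vec H$'' and then complementing, which is precisely the $\sstr$-side dual of your direct containment $\str(\s_{\vec H}) \subseteq \setdef{X}{G_X \mbox{ has no copy of } H}$. The essential step is identical in both formulations: orient a copy of $H$ as $\vec H$, note that containment is preserved under every extension of the orientation, and invoke the Sandwich Theorem.
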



Using the same approach as previously, we obtain that the VC-dimension of the system of orientations not containing a copy of $\vec{H}$, is at most $ex(G,H)$, the size of the largest subgraph of $G$ which does not contain a copy of $H$. When $G = K_n$ (the complete graph on $n$ vertices) the quantity has been denoted as $ex(n,H)$. When $H$ is also a complete graph, this is the well-known Tur{\'a}n number~\cite{erdos86}. 

A result of Erd\H{o}s, Frankl, and R\"{o}dl~\cite{erdos86} states that the number of graphs on $n$ vertices that do not contain a copy of $H$ is $2^{{ex(n,H)}(1 + o(1))}$, provided that the chromatic number of $H$, $\chi(H) \geq 3$. This result, together with Theorem~\ref{thm:forbidden orientation} yield:

\begin{proposition}\label{erdosprop}
Let $H$ be a graph with $\chi(H)\geq 3$ and let $\vec H$ be an orientation of $H$. 
Then: 
$$D(K_n,\vec H) \leq 2^{{ex(n,H)}(1 + o(1))}.$$
\end{proposition}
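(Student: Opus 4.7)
The plan is essentially to chain two facts: Theorem~\ref{thm:forbidden orientation} applied with $G = K_n$, and the cited theorem of Erd\H{o}s, Frankl, and R\"odl. First I would invoke Theorem~\ref{thm:forbidden orientation} with $G = K_n$ to conclude
\[
D(K_n,\vec H) \;\leq\; D'(K_n,H),
\]
so everything reduces to bounding $D'(K_n,H)$, the number of (spanning) subgraphs of $K_n$ that do not contain a copy of $H$.

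Next I would observe that a spanning subgraph of $K_n$ is determined uniquely by its edge set, and thus the spanning subgraphs of $K_n$ are in bijection with the graphs on the labeled vertex set $[n]$. Consequently $D'(K_n,H)$ equals the number of labeled graphs on $n$ vertices that do not contain $H$ as a subgraph. This is precisely the quantity estimated by Erd\H{o}s, Frankl, and R\"odl~\cite{erdos86}, which under the assumption $\chi(H) \geq 3$ is $2^{ex(n,H)(1+o(1))}$.

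Putting the two bounds together yields
\[
D(K_n,\vec H) \;\leq\; D'(K_n,H) \;=\; 2^{ex(n,H)(1+o(1))},
\]
which is the claim. There is really no hard step to overcome here, since Theorem~\ref{thm:forbidden orientation} does all the combinatorial work of relating oriented and undirected forbidden-subgraph counts, and the Erd\H{o}s--Frankl--R\"odl estimate supplies the asymptotic input. The only thing to be mildly careful about is the convention that in this paper a ``subgraph'' is a spanning subgraph, so that $D'(K_n,H)$ genuinely coincides with the count of labeled $H$-free graphs on $n$ vertices appearing in~\cite{erdos86}; once that identification is made, the proposition follows immediately.
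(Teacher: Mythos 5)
Your proof is correct and is exactly the paper's argument: apply Theorem~\ref{thm:forbidden orientation} with $G=K_n$, identify $D'(K_n,H)$ with the number of labeled $H$-free graphs on $n$ vertices, and invoke the Erd\H{o}s--Frankl--R\"odl estimate. Nothing further is needed.
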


When $\vec H$ is a tournament, Proposition~\ref{erdosprop} is implied by a result of Alon and Yuster~\cite{alon06}, proven using sophisticated techniques.

\section{Shattering-extremal systems of orientations} \label{sec4}

In this section we present $SE$ systems of orientations, or equivalently, systems for which the Sandwich Theorem collapses into an equality. Two natural and general classes of $SE$ systems correspond to orientations with a certain $s $ - $ t$ flow, respectively $s $ - $ t$ distance in a graph. Many other results, including Theorem~\ref{thm:steiner}, are direct consequences of the results for the flow or distance examples. The selection of examples is not exhaustive and there exist natural classes of $SE$ systems of orientations that seem not to be reducible to either the flow or the distance example.

\subsection{Flow} \label{flow}
We consider flow in both directed and undirected graphs (in an undirected graph an edge can be used in both directions). Let $G$ be a graph, let $c:E(G)\rightarrow \mathbb{R}^{\geq 0}$ be a \emph{capacity} function of the edges, $s\in V(G)$ be the \emph{source} and $t\in V(G)$ be the \emph{sink}. For a number $w\in\mathbb{R}$, let $\s_w$ denote the system of all orientations of $G$ for which there exists a \emph{flow} of size (at least) $w$ from $s$ to $d$. We show that:
\begin{theorem}\label{thm:flow SE}
$\s_w$ is $SE$.
\end{theorem}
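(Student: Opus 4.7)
The plan is to prove that $\s_w$ is $SE$ by establishing
\[
\sstr(\s_w) \;=\; \str(\s_w) \;=\; \A,
\]
where $\A$ is the family of sets $X \subseteq E$ such that some orientation of $E-X$ supports an $s$-$t$ flow of value at least $w$ (equivalently, such that the undirected network $G_{E-X}$ has maximum $s$-$t$ flow at least $w$). Once this common characterization is in hand, Theorem~\ref{thm:SE char} immediately yields that $\s_w$ is $SE$. Since the inclusion $\sstr(\s_w) \subseteq \str(\s_w)$ is automatic, I only need to prove $\A \subseteq \sstr(\s_w)$ and $\str(\s_w) \subseteq \A$.

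First I would prove $\A \subseteq \sstr(\s_w)$. Given $X \in \A$, fix a witnessing orientation $d'$ of $E-X$. The key (routine) observation is that adjoining directed arcs to a directed flow network cannot decrease the maximum-flow value; hence for every $d \in \{0,1\}^X$ the extension $d' \mymerge d$ defines an orientation of $G$ whose max $s$-$t$ flow is still at least $w$. Thus $d'$ witnesses $X \in \sstr(\s_w)$.

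The main inclusion is $\str(\s_w) \subseteq \A$, which I would handle by contrapositive. If $X \notin \A$, the standard identity between undirected max flow and the maximum of directed max flows over orientations implies that the undirected max $s$-$t$ flow on $G_{E-X}$ is strictly less than $w$. By the max-flow min-cut theorem for undirected networks, there is an $s$-$t$ cut $(S,T)$ whose capacity, summed only over edges of $E - X$ that cross it, is $< w$. I would then construct $d_0 \in \{0,1\}^X$ by orienting every $X$-edge crossing $(S,T)$ from the $T$-side to the $S$-side (and orienting the other $X$-edges arbitrarily). In any extension $d_0 \mymerge d'$, each $X$-edge crossing $(S,T)$ is backward, so the forward capacity of $(S,T)$ is bounded by the $E - X$ contribution, which is $< w$. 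Hence no extension supports flow $\geq w$, and therefore $X \notin \str(\s_w)$.

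Combining these inclusions with $\sstr(\s_w) \subseteq \str(\s_w)$ closes the chain, and Theorem~\ref{thm:SE char} finishes the proof. The conceptual core is the use of max-flow min-cut to convert the quantifier structure in the definitions of shattering and strong shattering into a clean statement about cut capacities; the argument mirrors the cyclic/acyclic reasoning of Lemmas~\ref{lem: helper cyclic, a-cyclic} and~\ref{lem:dual}, and I do not foresee any serious obstacle beyond pinpointing the right characterization $\A$ and writing the cut-capacity argument cleanly.
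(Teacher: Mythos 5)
Your proposal is correct and follows essentially the same route as the paper: both directions hinge on the same two observations, namely that a witnessing (acyclic) flow in $G_{E-X}$ survives every extension of the induced orientation, and that orienting the $X$-edges of a cut backward forces the forward capacity to come entirely from $E-X$, so that max-flow min-cut pins down $\str(\s_w)$. The only cosmetic differences are that you argue the second inclusion by contrapositive where the paper argues it directly, and that you invoke the undirected/directed max-flow equivalence as a black box where the paper makes it explicit by taking the flow acyclic.
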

As a corollary of the proof of Theorem \ref{thm:flow SE} we obtain the following identity:
\begin{theorem}\label{thm:flow equality}
For an arbitrary graph G:

\begin{tabular}{M{0.45}M{.015}M{0.43}}
the number of orientations of $G$ for which there exists a flow of size $w$ & $=$ & the number of subgraphs of $G$ for which there exists a flow of size $w$.
\end{tabular}
\end{theorem}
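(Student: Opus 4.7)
The plan is to exploit Theorem~\ref{thm:flow SE}: once $\s_w$ is known to be $SE$, Theorem~\ref{thm:SE char} gives $|\s_w|=|\sstr(\s_w)|$, so it suffices to identify $\sstr(\s_w)$ with the family of subgraphs of $G$ admitting an $s$-$t$ flow of value $\geq w$, modulo the involution $X\mapsto E-X$ on $\powset(E)$. Concretely, I would establish
$$\sstr(\s_w)=\{X\subseteq E\ :\ G_{E-X}\text{ admits an }s\text{-}t\text{ flow of value}\geq w\}.$$
Since complementation is a bijection on $\powset(E)$, this identity immediately gives $|\sstr(\s_w)|=|\{Y\subseteq E\ :\ G_{Y}\text{ admits an }s\text{-}t\text{ flow of value}\geq w\}|$, and combining with Theorem~\ref{thm:SE char} yields Theorem~\ref{thm:flow equality}.

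The forward containment of the identity is the easy direction. If $G_{E-X}$ admits an undirected $s$-$t$ flow of value $\geq w$, then such a flow picks out an orientation $d$ of $E-X$ (edges carrying nonzero flow are oriented in the flow direction, the rest arbitrarily) for which $\vec G_{E-X}^{d}$ already supports the same directed flow. Extending $d$ by any $f\in\{0,1\}^X$ only adds arcs, so the flow remains valid (with value $0$ on the new arcs) in $\vec G^{d\mymerge f}$, showing $X\in\sstr(\s_w)$.

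The main obstacle is the reverse containment, which must convert the directed condition supplied by strong shattering into an undirected one. Here I would argue via max-flow/min-cut: suppose $X\in\sstr(\s_w)$ with witness $d\in\{0,1\}^{E-X}$, and let $(S,T)$ be an arbitrary $s$-$t$ cut of $G$. Choose $f\in\{0,1\}^X$ that orients every $X$-edge crossing $(S,T)$ from $T$ to $S$. Strong shattering forces $d\mymerge f\in\s_w$, so by directed max-flow/min-cut the forward arcs of $\vec G^{d\mymerge f}$ across $(S,T)$ carry total capacity $\geq w$; by the choice of $f$, no such forward arc lies in $X$, so the edges of $E-X$ crossing $(S,T)$ already carry total capacity $\geq w$. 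Since $(S,T)$ was arbitrary, the undirected max-flow/min-cut theorem produces an $s$-$t$ flow of value $\geq w$ in $G_{E-X}$, closing the identity and hence the theorem.
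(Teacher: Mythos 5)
Your proposal is correct and follows essentially the same route as the paper: the forward containment is the paper's Lemma~\ref{lem:flow1} verbatim, and your cut-based reverse containment is the argument of Lemma~\ref{lem:flow2} (stated there for $\str(\s_w)$, which subsumes your $\sstr(\s_w)$ version since $\sstr(\s_w)\subseteq\str(\s_w)$). The only cosmetic difference is that you invoke Theorem~\ref{thm:flow SE} and Theorem~\ref{thm:SE char} explicitly, whereas the paper chains the Sandwich Theorem's inequalities between the two lemmas directly.
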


The following well-known equivalence is useful in establishing Theorem \ref{thm:flow SE}:
\begin{theorem}[max-flow min-cut]\label{thm:cut-flow}
The following two statements are equivalent:
\begin{enumpack}
\item \emph{There exists} a flow of size $w$.
\item \emph{Every} $s $ - $ t$ cut has capacity of at least $w$. ~\hfill\qed
\end{enumpack}
\end{theorem}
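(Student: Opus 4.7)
The plan is to prove the equivalence in two parts, handling (i) $\Rightarrow$ (ii) by direct flow-conservation accounting and (ii) $\Rightarrow$ (i) by the contrapositive via the standard residual-graph construction.

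For the easy direction (i) $\Rightarrow$ (ii), take any $s$-$t$ flow $f$ of value $w$ and any $s$-$t$ cut $(S, V \setminus S)$ with $s \in S$, $t \notin S$. Summing the conservation equations at every vertex of $S \setminus \{s\}$ and adding the definition of $\mathrm{val}(f)$ at $s$, all internal terms cancel and one obtains $w = \sum_{u \in S, v \notin S} f(u,v) - \sum_{u \in S, v \notin S} f(v,u)$. Dropping the nonnegative backward sum and bounding each forward term by its capacity yields $w \le \mathrm{cap}(S, V \setminus S)$, which is (ii).

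For the harder direction (ii) $\Rightarrow$ (i), I would argue the contrapositive. Assume no flow of size $w$ exists. The set of feasible $s$-$t$ flows is a nonempty compact convex subset of $\mathbb{R}^{E(G)}$ (bounded by the capacities and cut out by linear conservation constraints) and the value functional is continuous, so a maximum flow $f^*$ exists, with value $w^* < w$. Form the residual digraph $G_{f^*}$: each edge $e = (u,v)$ of the underlying (directed or undirected) network contributes a forward residual arc $(u,v)$ of capacity $c(e) - f^*(e)$ whenever this is positive, and a reverse residual arc $(v,u)$ of capacity $f^*(e)$ whenever this is positive. If $G_{f^*}$ contained a directed $s$-$t$ path, pushing the minimum residual capacity along it would produce a feasible flow of value $w^* + \varepsilon > w^*$, contradicting maximality. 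Hence let $S$ be the set of vertices reachable from $s$ in $G_{f^*}$; then $s \in S$ and $t \notin S$, so $(S, V \setminus S)$ is an $s$-$t$ cut. By the definition of $G_{f^*}$, every original edge $(u,v)$ with $u \in S$, $v \notin S$ is saturated ($f^*(u,v) = c(u,v)$), and every edge $(v,u)$ with $v \notin S$, $u \in S$ carries zero flow. Applying the identity from the easy direction to $f^*$ and this cut now gives $w^* = \mathrm{cap}(S, V \setminus S) < w$, contradicting (ii).

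The main obstacle is the real-valued capacity regime: a naive augmenting-path iteration (Ford--Fulkerson) need not terminate when capacities are irrational, so one cannot simply ``run the algorithm until no augmenting path exists''. I bypass this by invoking existence of a maximizer nonconstructively, via compactness of the flow polytope, and then extracting the saturated cut from the residual graph of this optimum; the residual-graph argument itself then uses only the maximality of $f^*$ rather than an algorithmic process. A secondary bookkeeping point is the undirected case, where an edge can carry flow in either direction; this only enriches the residual graph with additional arcs and does not affect either half of the argument.
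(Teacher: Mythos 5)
Your proof is correct, but there is nothing in the paper to compare it against: Theorem~\ref{thm:cut-flow} is quoted as the classical max-flow min-cut theorem and stated without proof (it is used purely as a black box in Lemmas~\ref{lem:flow1} and~\ref{lem:flow2}). Your argument is the standard textbook one, done carefully: the (i)~$\Rightarrow$~(ii) direction via summing conservation constraints over the source side of the cut, and the (ii)~$\Rightarrow$~(i) direction by taking a maximum flow $f^*$ and extracting the saturated cut from the set of vertices reachable in the residual graph. Your choice to obtain $f^*$ nonconstructively, from compactness of the flow polytope, is the right fix for the issue you identify — a naive Ford--Fulkerson iteration need not terminate with irrational capacities — and it is what makes the statement true for arbitrary real capacities, which is the regime the paper actually works in ($c:E(G)\rightarrow \mathbb{R}^{\geq 0}$, $w\in\mathbb{R}$). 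The only point worth tightening is the undirected case, which the paper explicitly allows: you should say precisely that an undirected edge of capacity $c(e)$ carrying flow $f^*(e)$ in one direction contributes residual capacity $c(e)-f^*(e)$ in that direction and $c(e)+f^*(e)$ in the other (or, equivalently, reduce to two antiparallel directed arcs), and that the capacity of a cut counts each crossing undirected edge once; with that convention your reachability argument and the accounting identity go through unchanged.
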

\begin{lemma}\label{lem:flow1}\ \\
$ \{X\subseteq E~:~\mbox{There exists a flow of size}\ w\ \mbox{in}\ G_{E-X}\}\subseteq\sstr(\s_w)$
\end{lemma}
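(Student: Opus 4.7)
The plan is to exhibit, for each $X$ in the left-hand family, an orientation of $E-X$ that witnesses strong shattering. By the definition of $\sstr$, I must find a $g \in \{0,1\}^{E-X}$ such that for every $f \in \{0,1\}^X$, the combined orientation $g \mymerge f$ lies in $S(\s_w)$, i.e.\ admits an $s$-$t$ flow of value at least $w$. The natural candidate for $g$ is an orientation induced by a witness flow on $G_{E-X}$ itself.

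So, given $X$ with a flow of size $w$ in $G_{E-X}$, I would first fix such a flow $\phi$. In the undirected case, $\phi$ may a priori use some edge in both directions; by the standard cancellation trick one may assume without loss of generality that $\phi$ uses each edge of $E-X$ in at most one direction while preserving its value $w$. I then define $g$ by orienting every edge $e \in E-X$ on which $\phi$ is strictly positive in the direction in which $\phi$ flows, and orienting the remaining edges of $E-X$ (those carrying zero flow) arbitrarily.

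The key observation is that $\phi$ is supported entirely on $E-X$ and respects $g$ by construction; hence, for any orientation $f$ of $X$, the function $\phi$ (extended by $0$ on the edges of $X$) is a feasible $s$-$t$ flow in the digraph $\vec G^{g \mymerge f}$ of value $w$. Therefore $g \mymerge f \in S(\s_w)$ for every $f \in \{0,1\}^X$, which is exactly the statement that $X \in \sstr(\s_w)$.

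The only mildly delicate point is the aforementioned reduction in the undirected setting: one must be careful that the orientation $g$ is well-defined, i.e.\ that $\phi$ does not insist on traversing the same edge in both directions. This is handled by the routine cancellation argument (replacing $\phi(e)$ with the signed difference of its two directional components), which preserves the flow value and conservation at every internal vertex. With that minor cleanup, the proof is essentially the one-line witness above.
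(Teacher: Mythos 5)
Your proof is correct and follows essentially the same route as the paper's: normalize the witness flow so that each edge of $E-X$ is used in at most one direction (the paper phrases this as ``assume w.l.o.g.\ that the flow is acyclic''), orient the positive-flow edges along the flow, and observe that the flow remains feasible under every extension to $X$. No gaps.
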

\begin{proof}
Let $X\subseteq E$ be such that there exists a flow of size $w$ in $G_{E-X}$ and let $f$ be such a flow. Assume w.l.o.g.\ that $f$ is acyclic. Pick an orientation $d$ of $E-X$ such that every edge with positive flow is oriented in the direction of the flow. In $\vec G_{E-X}^d$, $f$ is a flow of size $w$. The flow $f$ remains feasible in every extension of $d$ to $E$. This means that $X\in \sstr(\s_w)$.
\end{proof}
\begin{lemma}\label{lem:flow2}\ \\
$ \str(\s_w)\subseteq\{X\subseteq E~:~\mbox{There exists a flow of size}\ w\ \mbox{in}\ G_{E-X}\}$
\end{lemma}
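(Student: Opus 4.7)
The plan is to prove the contrapositive using the max-flow min-cut theorem (Theorem~\ref{thm:cut-flow}) that has just been recalled. Suppose $X \subseteq E$ is such that $G_{E-X}$ admits no flow of size $w$. By max-flow min-cut, there is an $s$-$t$ cut $(V', V'')$ (with $s\in V'$, $t\in V''$) whose capacity in $G_{E-X}$ is strictly less than $w$; equivalently, the edges of $E-X$ crossing the cut have total capacity $< w$. The goal is to exhibit an orientation $d$ of $X$ that cannot be extended to an orientation of $E$ admitting a flow of size $w$, which shows $X \notin \str(\s_w)$.

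The key step is to choose $d$ to be an orientation of $X$ that directs every edge of $X$ crossing the cut $(V', V'')$ as a \emph{backward} edge (i.e., from $V''$ to $V'$); edges of $X$ not crossing the cut can be oriented arbitrarily. For any extension $d'$ of $d$ to all of $E$, the forward edges (from $V'$ to $V''$) in $\vec G^{d'}$ crossing the cut are a subset of the edges of $E-X$ crossing the cut, since every $X$-edge across the cut was oriented backward. Hence the directed capacity of $(V', V'')$ in $\vec G^{d'}$ is at most the capacity of the cut edges in $E-X$, which is $< w$. By max-flow min-cut applied to the directed graph, $\vec G^{d'}$ admits no flow of size $w$, so $d'\notin S(\s_w)$. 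Since this holds for every extension $d'$, the orientation $d$ of $X$ cannot be extended into $S(\s_w)$, so $X\notin \str(\s_w)$.

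Together with Lemma~\ref{lem:flow1}, this completes the characterization
\[
\sstr(\s_w) = \str(\s_w) = \{X \subseteq E : G_{E-X} \text{ has a flow of size } w\},
\]
and Theorem~\ref{thm:flow SE} follows. Theorem~\ref{thm:flow equality} is then obtained by applying Theorem~\ref{thm:SE char} (equality of $|\s|$ and $|\str(\s)|$ for $SE$ systems) and observing that $\str(\s_w)$ is in bijection with the subgraphs $G_{E-X}$ having a flow of size $w$ (via $X \mapsto E-X$).

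I do not expect a serious obstacle here: the only subtlety is the direction convention (orienting $X$-edges \emph{backward} across the chosen cut so that they cannot help carry flow from $s$ to $t$), which is the directed analogue of the construction used in the proof of Theorem~\ref{thm:strong} for $k$-strong orientations. The essential content is that max-flow min-cut duality lets us convert a small undirected cut in $G_{E-X}$ into a small \emph{directed} cut in every extension of a cleverly chosen $d$.
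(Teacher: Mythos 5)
Your proof is correct and uses essentially the same argument as the paper: orient the $X$-edges crossing an $s$-$t$ cut backward and apply max-flow min-cut; you merely phrase it contrapositively (fixing one deficient cut of $G_{E-X}$) where the paper argues directly that every cut retains capacity $w$ within $E-X$. No substantive difference.
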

\begin{proof}
Let $X\in \str(\s_w)$. It is enough to show that every $s $ - $ t$ cut has capacity of at least $w$ in $G_{E-X}$. Let $(S,T)$ be an $s $ - $ t$ cut. Choose an orientation $d$ of $X$ that orients all edges of $X$ which are contained in the cut as \emph{backward} edges. Since $X\in\str(\s_w)$, there exists an extension of $d$ in which there is a flow of size $w$. Thus, by Theorem \ref{thm:cut-flow}, in the resulting digraph the capacity of $(S,T)$ is at least $w$. The claim follows from the fact that all forward edges of the cut are from $E-X$.
\end{proof}
From Lemmas \ref{lem:flow1} and \ref{lem:flow2} it follows that $\str(\s_w)\subseteq\sstr(\s_w)$. Theorem \ref{thm:flow SE} follows from the fact that the reverse inclusion always holds. Theorem \ref{thm:flow equality} follows by the Sandwich Theorem, combined with Lemmas \ref{lem:flow1} and \ref{lem:flow2}.

Lemmas \ref{lem:flow1} and \ref{lem:flow2} give a characterization of $\vc(\s_w)$. For $w\in\mathbb{R}$, let $e_w$ denote the size\footnote{As a special case, $e_w=-1$ when there exists no flow of size $w$ in $G$.} of the smallest subgraph of $G$ that admits an $s $ - $ t$ flow of size $w$. Computing $e_w$ is a natural NP-hard optimization problem (it reduces to minimum Steiner tree, as shown in \S\,\ref{furtherexamples}).

\begin{proposition}\label{prop:vc flow}
$\dvc(\s_w)=\vc(\s_w)=m-e_w$.
\end{proposition}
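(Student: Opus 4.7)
The plan is to read off Proposition~\ref{prop:vc flow} almost directly from the work already done in Lemmas~\ref{lem:flow1} and \ref{lem:flow2}, together with Theorem~\ref{thm:flow SE}. First, I would observe that the equality $\dvc(\s_w)=\vc(\s_w)$ is an immediate corollary of Theorem~\ref{thm:flow SE}: since $\s_w$ is $SE$, we have $\sstr(\s_w)=\str(\s_w)$, and therefore the maxima defining $\dvc$ and $\vc$ are taken over the same family of sets. So the only substantive task is to identify the common value as $m-e_w$.

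Next, I would chain the two inclusions proved in the preceding lemmas. Lemma~\ref{lem:flow1} gives
\[
\{X\subseteq E \ :\ G_{E-X}\text{ admits an $s$-$t$ flow of size }w\}\ \subseteq\ \sstr(\s_w),
\]
while Lemma~\ref{lem:flow2} gives
\[
\str(\s_w)\ \subseteq\ \{X\subseteq E \ :\ G_{E-X}\text{ admits an $s$-$t$ flow of size }w\}.
\]
Combined with the always-true inclusion $\sstr(\s_w)\subseteq\str(\s_w)$, this pins down
\[
\str(\s_w)=\sstr(\s_w)=\{X\subseteq E\ :\ G_{E-X}\text{ admits a flow of size }w\},
\]
which is the exact characterization we need.

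Finally, I would compute the VC-dimension directly from the definition. By the characterization above,
\[
\vc(\s_w)=\max\bigl\{\,|X|\ :\ G_{E-X}\text{ admits a flow of size }w\,\bigr\}
       =m-\min\bigl\{\,|E-X|\ :\ G_{E-X}\text{ admits a flow of size }w\,\bigr\},
\]
and the latter minimum is exactly $e_w$ by definition. Hence $\vc(\s_w)=m-e_w$, and by the first step $\dvc(\s_w)$ takes the same value. The edge case where no $s$-$t$ flow of size $w$ exists in $G$ at all is handled by the convention $e_w=-1$ and the convention $\vc(\s)=\dvc(\s)=-1$ for $S(\s)=\emptyset$, which match up consistently.

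There is no real obstacle here; the statement is essentially a bookkeeping consequence of the structural results already established for $\s_w$. The only point that requires a moment of care is making sure the $S(\s_w)=\emptyset$ boundary case is treated consistently with the footnoted convention for $\vc$ and $\dvc$, but this is routine.
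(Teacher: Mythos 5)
Your proof is correct and follows exactly the route the paper intends: Lemmas~\ref{lem:flow1} and \ref{lem:flow2}, combined with the automatic inclusion $\sstr(\s_w)\subseteq\str(\s_w)$, pin down $\str(\s_w)=\sstr(\s_w)$ as the family of complements of flow-admitting edge sets, from which $m-e_w$ is read off directly. One small caveat: your claim that the degenerate case ``matches up consistently'' is not actually true under the stated conventions --- if $S(\s_w)=\emptyset$ then $\vc(\s_w)=\dvc(\s_w)=-1$ while $m-e_w=m-(-1)=m+1$ --- but this is an artifact of the paper's own footnote conventions rather than a flaw in your argument for the substantive case.
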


From Lemma~\ref{lem:complement} it follows that $\neg\s_w$, namely the system of orientations with a maximum flow less than $w$, is also $SE$. An application of Lemma \ref{lem:star str and sstr} yields a characterization of $\str(\neg\s_w)$ $(\,=\sstr(\neg \s_w)\,)$ and results analogous to Theorem \ref{thm:flow equality} and Proposition \ref{prop:vc flow} regarding $\neg\s_w$.

\subsection{Distance} \label{distance}
We consider distance in both directed and undirected graphs (in an undirected graph an edge can be used in both directions). Let $G$ be a graph, let $w:E(G)\rightarrow \mathbb{R}^{\geq 0}$ be a \emph{length} function of the edges, $s\in V(G)$ be the \emph{source} and $t\in V(G)$ be the \emph{destination}. For a number $d\in\mathbb{R}$, let $\s_d$ denote the system of all orientations of $G$ in which the distance from $s$ to $t$ is at most $d$. We show that:
\begin{theorem}\label{thm:dist SE}
$\s_d$ is $SE$.
\end{theorem}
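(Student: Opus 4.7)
The plan is to mirror the proof of Theorem~\ref{thm:flow SE}, replacing the max-flow/min-cut duality with the duality between shortest paths and vertex potentials. Concretely, I will establish the two ``sandwiching'' inclusions
\[
\{X\subseteq E~:~\mathrm{dist}_{G_{E-X}}(s,t)\leq d\}\subseteq\sstr(\s_d), \qquad \str(\s_d)\subseteq\{X\subseteq E~:~\mathrm{dist}_{G_{E-X}}(s,t)\leq d\},
\]
where $\mathrm{dist}_H(s,t)$ denotes the undirected shortest-path $s$-$t$ distance in $H$, taken to be $\infty$ when no such path exists. Combined with the always-valid inclusion $\sstr(\s_d)\subseteq\str(\s_d)$ from \S\,\ref{subsec22}, this forces equality in all three containments, so $\sstr(\s_d)=\str(\s_d)$ and $\s_d$ is $SE$. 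As in \S\,\ref{flow}, these inclusions also yield an analogue of Theorem~\ref{thm:flow equality} (orientations and subgraphs of $G$ realising $s$-$t$ distance $\leq d$ are equinumerous) and an expression for $\vc(\s_d)=\dvc(\s_d)$ in the spirit of Proposition~\ref{prop:vc flow}.

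The first inclusion is straightforward, analogously to Lemma~\ref{lem:flow1}: given an undirected $s$-$t$ path $P$ of length at most $d$ in $G_{E-X}$, orient $P$ from $s$ to $t$ and orient the remaining edges of $E-X$ arbitrarily; every extension of this partial orientation to $E$ keeps $P$ a directed $s$-$t$ path of length $\leq d$ in $\vec G^f$, so $X\in\sstr(\s_d)$.

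The main obstacle is the second inclusion, the analogue of Lemma~\ref{lem:flow2}. Arguing the contrapositive, suppose $\mathrm{dist}_{G_{E-X}}(s,t)>d$ and define the potential $\phi(v)=\mathrm{dist}_{G_{E-X}}(s,v)$ (with $\phi(v)=\infty$ when $v$ is unreachable from $s$ in $G_{E-X}$). Orient each edge $\{u,v\}\in X$ from the endpoint of larger $\phi$-value to the smaller, treating $\infty$ as largest and breaking ties arbitrarily. The key claim is that for any extension $f$ of this partial orientation and any directed edge $(u,v)$ of $\vec G^f$ one has $\phi(v)\leq \phi(u)+w(\{u,v\})$: for $(u,v)\in E-X$ this is just the triangle inequality inside the undirected graph $G_{E-X}$, and for $(u,v)\in X$ the choice of orientation gives $\phi(v)\leq \phi(u)\leq \phi(u)+w(\{u,v\})$ since $w\geq 0$. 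Telescoping this inequality along any directed $s$-$t$ path in $\vec G^f$ and using $\phi(s)=0$ lower-bounds its length by $\phi(t)>d$; hence no directed $s$-$t$ path has length $\leq d$, so $f\notin \s_d$ and $X\notin\str(\s_d)$, completing the proof.
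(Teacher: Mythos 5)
Your proof is correct and takes essentially the same route as the paper: the first inclusion is exactly Lemma~\ref{lem:dist1}, and your second inclusion is Lemma~\ref{lem:dist2} argued contrapositively, with the explicit potential $\phi(v)=\mathrm{dist}_{G_{E-X}}(s,v)$ and a telescoping step standing in for the paper's appeal to an arbitrary potential function via Theorem~\ref{thm:potential}. The underlying idea --- orient the edges of $X$ against a potential of $G_{E-X}$ so that no extension can realise a short directed $s$--$t$ path --- is identical.
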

As a corollary of the proof of Theorem \ref{thm:flow SE} we obtain the following identity:
\begin{theorem}\label{thm:dist equality}
For an arbitrary graph G:

\begin{tabular}{M{0.45}M{.01}M{0.45}}
the number of orientations of $G$ in which the distance from $s$ to $t$ is at most $d$ & $=$ & the number of subgraphs of $G$ in which the distance from $s$ to $t$ is at most $d$.
\end{tabular}
\end{theorem}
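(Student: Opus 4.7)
My plan is to follow the flow template of Theorems~\ref{thm:flow SE}--\ref{thm:flow equality}: I would characterize both $\str(\s_d)$ and $\sstr(\s_d)$ as equal to the family
\begin{equation*}
\F \;=\; \bigl\{X\subseteq E \,:\, \text{the } s\text{-}t \text{ distance in } G_{E-X} \text{ is at most } d\bigr\}.
\end{equation*}
Combined with the trivial inclusion $\sstr(\s_d)\subseteq\str(\s_d)$, this simultaneously yields Theorem~\ref{thm:dist SE}, and the Sandwich Theorem together with the involution $X\mapsto E-X$ on $\powset(E)$ converts the resulting equality $|\s_d|=|\F|$ into the counting identity of Theorem~\ref{thm:dist equality}.

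I would first handle the easy direction $\F\subseteq\sstr(\s_d)$ by a near-verbatim analogue of Lemma~\ref{lem:flow1}, with ``path'' in place of ``flow'': given an $s$-$t$ path $P$ of length at most $d$ in $G_{E-X}$, orient $P$ as a directed path from $s$ to $t$ and the remaining edges of $E-X$ arbitrarily; any extension of this partial orientation to $E$ still contains $P$ as a directed path, and hence keeps the $s$-$t$ distance at most $d$.

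The main obstacle will be the reverse inclusion $\str(\s_d)\subseteq\F$, because the max-flow/min-cut duality invoked in Lemma~\ref{lem:flow2} is unavailable here and must be replaced by its shortest-path counterpart, a \emph{potential} argument. Arguing contrapositively, I would assume that the $s$-$t$ distance in $G_{E-X}$ exceeds $d$ and define the potential $\phi(v)=\min\bigl(\mathrm{dist}_{G_{E-X}}(s,v),\,d+1\bigr)$, so that $\phi(s)=0$, $\phi(t)>d$, and, since capping with a constant is $1$-Lipschitz, $|\phi(u)-\phi(v)|\leq w(uv)$ for every edge $uv\in E-X$. Then I would orient each edge $uv\in X$ from the endpoint of larger $\phi$-value to the endpoint of smaller $\phi$-value. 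In any extension of this partial orientation to all of $E$, every forward arc $u\to v$ satisfies $\phi(v)-\phi(u)\leq w(uv)$: on arcs from $X$ the left side is $\leq 0$ by construction, and on arcs from $E-X$ it follows from the Lipschitz bound. Hence $\phi$ is a valid length potential for the entire digraph, so every directed $s$-$t$ walk has length at least $\phi(t)-\phi(s)>d$, witnessing $X\notin\str(\s_d)$.

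With both inclusions in place, the Sandwich Theorem closes the argument. The only cases I would want to double-check are $\mathrm{dist}_{G_{E-X}}(s,t)=\infty$, cleanly absorbed by the cap at $d+1$, and zero-length edges, which pose no problem since the Lipschitz bound and the downhill orientation both still apply.
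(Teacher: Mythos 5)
Your proposal is correct and follows essentially the same route as the paper: the easy inclusion is the identical path-orientation argument, and the hard inclusion is the same potential-function argument whose key step is orienting the edges of $X$ downhill so that the potential stays valid under every extension, forcing every directed $s$\,-\,$t$ path to be long. The only cosmetic difference is that you argue contrapositively with an explicit capped-distance potential $\min(\mathrm{dist}_{G_{E-X}}(s,\cdot),\,d+1)$, whereas the paper quantifies over all potential functions of $G_{E-X}$ and invokes the path/potential equivalence (Theorem~\ref{thm:potential}) as a black box; your version is, if anything, slightly more self-contained.
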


In order to highlight a certain symmetry between flow and distance (or more precisely, between minimum cut size and minimum path length), we establish the results of this subsection in a manner analogous to the proofs of \S\,\ref{flow}.

We call $\pi:V(G) \rightarrow \mathbb{R}^{\geq 0}$ a \emph{potential function} of the vertices, if for every edge $(u,v)$ the condition $\pi(v) - \pi(u) \leq w(u,v)$ holds. The \emph{potential difference} of $G$ with respect to $\pi$ is $\pi(t)-\pi(s)$. The following easily verifiable equivalence helps in establishing Theorem \ref{thm:dist SE}:
\begin{theorem}\label{thm:potential}
The following two statements are equivalent:
\begin{enumpack}
\item \emph{There exists} a potential function with potential difference $d$.
\item \emph{Every} $s $ - $ t$ path has length at least $d$. ~\hfill\qed
\end{enumpack}
\end{theorem}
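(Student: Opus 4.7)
The plan is to recognize this as a standard LP-duality style equivalence between shortest paths and vertex potentials (the dual of the shortest-path LP), proven in two directions.

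First I would handle the easy direction, (i)$\Rightarrow$(ii), by a telescoping argument. Assume $\pi$ is a potential function with $\pi(t)-\pi(s)\geq d$. For any $s$--$t$ path $s=v_0,v_1,\ldots,v_k=t$, the defining inequality $\pi(v_i)-\pi(v_{i-1})\leq w(v_{i-1},v_i)$ holds for every edge on the path. Summing over $i=1,\ldots,k$ the left-hand side telescopes to $\pi(t)-\pi(s)\geq d$, so the total length of the path is at least $d$. This uses nothing about the structure of $G$ beyond the definition of a potential function.

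For the harder direction, (ii)$\Rightarrow$(i), I would explicitly construct a potential function from shortest-path distances. For $v\in V(G)$ let $\delta(v)$ denote the length of a shortest $s$--$v$ path, with $\delta(v)=+\infty$ if none exists. Define
\[
\pi(v)\;=\;\min\bigl(\delta(v),\,d\bigr).
\]
Then $\pi(s)=0$ and, by hypothesis (ii), $\delta(t)\geq d$, so $\pi(t)=d$, giving potential difference exactly $d$. It remains to verify $\pi(v)-\pi(u)\leq w(u,v)$ for each edge $(u,v)$. A short case analysis on whether $\delta(u)$ and $\delta(v)$ exceed $d$ handles this: when both are at most $d$ the triangle inequality $\delta(v)\leq\delta(u)+w(u,v)$ does the job; when $\delta(u)>d$ we have $\pi(u)=d\geq\pi(v)$, so the difference is nonpositive; and when $\delta(v)>d$ while $\delta(u)\leq d$, we use $\pi(v)-\pi(u)=d-\delta(u)\leq\delta(v)-\delta(u)\leq w(u,v)$.

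The only mild subtlety — and what I would flag as the main (modest) obstacle — is the book-keeping around vertices unreachable from $s$, or vertices whose distance exceeds $d$. Truncating $\delta$ at $d$ neutralises this: the truncation preserves the edge inequality because min with a constant is $1$-Lipschitz and never increases differences, while still giving $\pi(t)=d$ thanks to hypothesis (ii). This completes the equivalence.
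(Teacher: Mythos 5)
Your proof is correct. The paper does not actually prove Theorem~\ref{thm:potential} --- it is stated with a terminating \qed as an ``easily verifiable equivalence'' (it is the shortest-path analogue of max-flow/min-cut, i.e.\ LP duality for shortest paths) --- and your argument is exactly the standard one the authors have in mind: telescoping the edge inequalities along a path for (i)$\Rightarrow$(ii), and taking the truncated distance function $\pi(v)=\min(\delta(v),d)$ for (ii)$\Rightarrow$(i), with the $1$-Lipschitz truncation correctly handling unreachable and far-away vertices. Two cosmetic remarks: the construction tacitly assumes $d\geq 0$ so that $\pi$ lands in $\mathbb{R}^{\geq 0}$ as the paper's definition requires (the only meaningful regime, since edge lengths are nonnegative), and in the undirected setting the same Lipschitz bound gives $|\pi(u)-\pi(v)|\leq w(u,v)$, which is what is needed when an edge may be traversed in either direction.
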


\begin{lemma}\label{lem:dist1}\ \\
$ \{X\subseteq E~:~\mbox{The distance from $s$ to $t$ in $G_{E-X}$ is at most $d$}\}\subseteq\sstr(\s_d)$
\end{lemma}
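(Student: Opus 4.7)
The plan is to mirror the proof of Lemma~\ref{lem:flow1} for flows, exploiting the fact that a short $s$--$t$ walk in the undirected graph $G_{E-X}$ can be oriented to yield a short $s$--$t$ directed walk that survives every extension. So, assume we are given $X \subseteq E$ such that the distance from $s$ to $t$ in $G_{E-X}$ is at most $d$. To show $X \in \sstr(\s_d)$, I must exhibit an orientation $g \in \{0,1\}^{E-X}$ such that for every orientation $f \in \{0,1\}^X$, the combined orientation $g \star f$ lies in $S(\s_d)$, i.e.\ in $\vec G^{g \star f}$ there is a directed $s$--$t$ path of total length at most $d$.

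First, I would fix a shortest $s$--$t$ path $P$ in $G_{E-X}$; by hypothesis its total length is at most $d$, and since we are in a simple graph it uses each edge at most once. Next, I would define $g$ by orienting every edge of $P$ in the direction traversed from $s$ to $t$ along $P$, and orienting the remaining edges of $E - X$ arbitrarily (say, according to the canonical orientation $\vec E$). This is a well-defined assignment in $\{0,1\}^{E-X}$ because $P$ is a simple path, so no edge is asked to be oriented in two conflicting ways.

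Now consider any extension $f \in \{0,1\}^X$. In the digraph $\vec G^{g \star f}$, all edges of $P$ still appear as directed edges from $s$ toward $t$, since $P \subseteq E - X$ and their orientation is fixed by $g$. Hence $P$ forms a directed $s$--$t$ path of length at most $d$ in $\vec G^{g \star f}$, so $g \star f \in S(\s_d)$. Since this holds for every $f$, we conclude $X \in \sstr(\s_d)$, as required.

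I do not expect any real obstacle here: the argument is a direct analogue of Lemma~\ref{lem:flow1}, with ``orient the support of an acyclic flow along the flow'' replaced by ``orient a shortest path from $s$ to $t$.'' The only place to be slightly careful is the consistency of the partial orientation $g$ on $P$, which is guaranteed because $P$ is a simple path in a simple graph and so never asks a single edge to be used in both directions.
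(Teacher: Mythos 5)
Your proposal is correct and follows essentially the same argument as the paper: pick an $s$--$t$ path of length at most $d$ in $G_{E-X}$, orient its edges from $s$ toward $t$ (the rest of $E-X$ arbitrarily), and observe that every extension to $X$ preserves this directed path. The extra care you take about the consistency of the partial orientation along a simple path is fine but not an issue the paper needed to dwell on.
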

\begin{proof}
Let $X\subseteq E$ be such that there exists an $s $ - $ t$ path of length at most $d$ in $G_{E-X}$ and let $p$ be such a path. Pick an orientation $f$ of $E-X$ such that in $\vec G_{E-X}^f$, the path $p$ is oriented from $s$ to $t$. In every extension of $f$ to $E$, the path $p$ remains a valid $s $ - $ t$ path of length at most $d$. This means that $X\in \sstr(\s_d)$.
\end{proof}
\begin{lemma}\label{lem:dist2}\ \\
$ \str(\s_d)\subseteq\{X\subseteq E~:~\mbox{The distance from $s$ to $t$ in $G_{E-X}$ is}\\ \mbox{ at most $d$}\}$
\end{lemma}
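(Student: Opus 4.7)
The approach mirrors the preceding proof of Lemma~\ref{lem:flow2}, with Theorem~\ref{thm:potential} (the path/potential duality) playing the role that max-flow min-cut played in the flow setting. I argue the contrapositive: assuming that the distance from $s$ to $t$ in $G_{E-X}$ exceeds $d$, I exhibit an orientation $d_0$ of $X$ that admits no extension to an orientation in $\s_d$, which shows $X\notin\str(\s_d)$.

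The construction has two steps. First, since $\mathrm{dist}_{G_{E-X}}(s,t) > d$, the function $\pi(v) := \mathrm{dist}_{G_{E-X}}(s,v)$ is a potential on $V$ (in the sense of Theorem~\ref{thm:potential}, applied to $G_{E-X}$ with each undirected edge viewed as a pair of opposite arcs) satisfying $\pi(t)-\pi(s) > d$ and $|\pi(u)-\pi(v)| \le w(\{u,v\})$ for every $\{u,v\} \in E-X$ by the triangle inequality. This two-sided inequality is the key bookkeeping point: it guarantees that \emph{every} orientation of an edge in $E-X$ automatically satisfies the directed potential constraint $\pi(\text{head})-\pi(\text{tail}) \le w$. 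Second, define $d_0$ on $X$ by orienting each edge $\{u,v\}$ from the endpoint of larger $\pi$-value to the endpoint of smaller $\pi$-value, so that along $d_0$ the directed potential change is at most $0 \le w(\{u,v\})$.

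Combining the two steps, $\pi$ remains a valid potential in any extension of $d_0$ to an orientation of $E$, so by the easy direction of Theorem~\ref{thm:potential} every directed $s$-$t$ path in such an extension has length at least $\pi(t)-\pi(s) > d$. Hence no extension of $d_0$ lies in $\s_d$, contradicting $X \in \str(\s_d)$, and we conclude $\mathrm{dist}_{G_{E-X}}(s,t) \le d$. The only delicate point is the undirected-vs-directed interface noted above; once the two-sided inequality $|\pi(u)-\pi(v)| \le w(\{u,v\})$ is in hand, the argument becomes a direct analogue of the cut/flow case treated in Lemma~\ref{lem:flow2}.
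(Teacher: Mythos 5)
Your proof is correct and follows essentially the same route as the paper's: both orient the edges of $X$ from larger to smaller potential and exploit the fact that $\pi$ stays feasible under \emph{every} extension because the undirected edges of $E-X$ satisfy the two-sided constraint; the paper phrases this directly for an arbitrary potential function of $G_{E-X}$, whereas you argue the contrapositive with the specific potential $\pi=\mathrm{dist}_{G_{E-X}}(s,\cdot)$. The only cosmetic caveat is that this particular $\pi$ may take the value $+\infty$ on vertices unreachable from $s$, so it should be truncated (e.g.\ replace $\pi(v)$ by $\min\{\pi(v),\,d+1\}$), which preserves the Lipschitz condition and the conclusion $\pi(t)-\pi(s)>d$.
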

\begin{proof}
Let $X\in \str(\s_d)$. We show that every potential function for $G_{E-X}$ gives rise to a potential difference of at least $d$. Let $\pi$ be a potential function for $G_{E-X}$. Choose an orientation $f$ of $X$ that orients all edges of $X$ from larger towards smaller potential, according to $\pi$. Clearly, $\pi$ remains a valid potential function for the entire $G$. Since $X\in\str(\s_d)$, there exists an extension of $f$ in which there is an $s $ - $ t$ path of length at most $d$. Thus, by the complement of Theorem \ref{thm:potential}, in the resulting digraph, the potential difference is at least $d$. The claim follows from the fact that all potential-increasing edges are from $E-X$.
\end{proof}
From Lemmas \ref{lem:dist1} and \ref{lem:dist2} it follows that $\str(\s_d)\subseteq\sstr(\s_d)$. Theorem \ref{thm:dist SE} follows from the fact that the reverse inclusion always holds. Theorem \ref{thm:dist equality} follows by the Sandwich Theorem, combined with Lemmas \ref{lem:dist1} and \ref{lem:dist2}.

Let $d\in\mathbb{R}$, let $p_d$ denote the size\footnote{As a special case, $p_d=-1$ when there is no path of length at most $d$ in $G$.} of the smallest subgraph of $G$ that contains an $s $ - $ t$ path of length at most $d$. Observe that $p_d$ can be computed in polynomial time. 

\begin{proposition}\label{prop:vc dist}
$\dvc(\s_d)=\vc(\s_d)=m-p_d$.
\end{proposition}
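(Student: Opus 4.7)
The plan is to leverage the shattering-extremal property of $\s_d$ (Theorem~\ref{thm:dist SE}) together with the explicit characterization of $\str(\s_d)$ obtained in the proof of Theorem~\ref{thm:dist equality}, and then reduce the VC-dimension computation to an optimization over edge sets that directly matches the definition of $p_d$.

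First, I would invoke Theorem~\ref{thm:dist SE} to conclude that $\str(\s_d) = \sstr(\s_d)$; hence by definition $\vc(\s_d) = \dvc(\s_d)$. This reduces the task to computing $\vc(\s_d)$ alone, and the two equalities in the statement will follow simultaneously.

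Second, I would combine Lemma~\ref{lem:dist1} and Lemma~\ref{lem:dist2}. Lemma~\ref{lem:dist1} gives the inclusion $\{X \subseteq E : G_{E-X} \text{ contains an $s$-$t$ path of length at most $d$}\} \subseteq \sstr(\s_d)$, while Lemma~\ref{lem:dist2} gives the reverse inclusion for $\str(\s_d)$. Since $\sstr(\s_d) \subseteq \str(\s_d)$ always, these sandwich both $\str(\s_d)$ and $\sstr(\s_d)$ between the same family, yielding the clean characterization
$$\str(\s_d) = \sstr(\s_d) = \{X \subseteq E : G_{E-X} \text{ contains an $s$-$t$ path of length at most $d$}\}.$$

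Third, I would apply the definition $\vc(\s_d) = \max\{|X| : X \in \str(\s_d)\}$. By the characterization just obtained, this equals the maximum size of an edge set $X \subseteq E$ such that $G_{E-X}$ admits an $s$-$t$ path of length at most $d$. Writing $Z = E - X$ we have $|X| = m - |Z|$, so this maximum equals $m - \min\{|Z| : G_Z \text{ admits an $s$-$t$ path of length at most $d$}\}$. By the definition of $p_d$ (as the size of the smallest subgraph of $G$ containing an $s$-$t$ path of length at most $d$), the inner minimum is precisely $p_d$, so $\vc(\s_d) = m - p_d$, completing the proof.

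There is no real obstacle here: once the $SE$ property is invoked to collapse $\vc$ and $\dvc$, and the characterization of $\str(\s_d)$ from Lemmas~\ref{lem:dist1} and~\ref{lem:dist2} is in hand, the only content is the dualization between ``maximize $|X|$'' and ``minimize $|E-X|$'', which directly matches the definition of $p_d$. The proof is thus a mirror image of the corresponding argument for Proposition~\ref{prop:vc flow} in \S\,\ref{flow}.
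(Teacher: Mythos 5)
Your proposal is correct and matches the paper's (implicit) argument exactly: the paper derives Proposition~\ref{prop:vc dist}, like its flow analogue Proposition~\ref{prop:vc flow}, directly from the characterization $\str(\s_d)=\sstr(\s_d)=\{X\subseteq E : G_{E-X} \text{ admits an } s\text{-}t \text{ path of length at most } d\}$ given by Lemmas~\ref{lem:dist1} and~\ref{lem:dist2}, followed by the same max/min dualization $|X| = m - |E-X|$.
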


From Lemma~\ref{lem:complement} it follows that $\neg\s_d$, namely the system of orientations with $s $ - $ t$ distance more than $d$, is also $SE$. Again, we obtain results analogous to Theorem \ref{thm:dist equality} and Proposition \ref{prop:vc dist} regarding $\neg\s_d$.

\subsection{Further examples}\label{furtherexamples}

Many ``natural'' systems of orientations can be viewed as special cases of the above systems. 

As a first example, let $G$ be a graph, let $s\in V(G)$ and $W\subseteq V(G)$. Let $\s_{s,W}(G)$ denote the system of all orientations for which every $w\in W$ is reachable from $s$. Transform $G$ into a flow network $G'$ by letting the capacities of all $e\in E(G)$ be infinity, designating $s$ as the source, and adding a destination $t$ which is connected to every $w\in W$ with edges of unit capacity. It is not hard to see that $\s_{s,W}(G)$ is transformed into $\s_{\left\lvert W\right\rvert}(G')$. In this case, Theorem \ref{thm:flow equality} and Proposition \ref{prop:vc flow} give the following results:

\begin{theorem}\label{thm:steiner} For an arbitrary graph $G$:

\begin{tabular}{M{0.46}M{.01}M{0.43}}
the number of orientations of $G$ in which every $w\in W$ is reachable from $s$ & $=$ & the number of subgraphs of $G$ in which $W\cup\{s\}$ is connected.
\end{tabular}
\end{theorem}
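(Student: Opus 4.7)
The plan is to derive Theorem~\ref{thm:steiner} from Theorem~\ref{thm:flow equality} via the flow-network reduction sketched just before the theorem statement. First I build the auxiliary network $G' = (V(G) \cup \{t\},\; E(G) \cup \{(v,t) : v \in W\})$, taking $s$ as source and $t$ as sink, assigning capacity $\lvert W\rvert$ to every original edge of $G$ and capacity $1$ to every new edge $(v,t)$. Let $\s_{\lvert W\rvert}(G')$ denote the system of orientations of $G'$ admitting an $s$--$t$ flow of size $\lvert W\rvert$, and let $\F_{\lvert W\rvert}(G')$ denote the family of subgraphs of $G'$ with the same property. Applying Theorem~\ref{thm:flow equality} to $G'$ with flow-size parameter $\lvert W\rvert$ immediately gives
\[
\lvert \s_{\lvert W\rvert}(G') \rvert \;=\; \lvert \F_{\lvert W\rvert}(G') \rvert.
\]

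The core of the argument is to exhibit two bijections translating the two sides of this equality back to $G$. The common observation is that the trivial cut $(V(G'),\{t\})$ consists only of the $\lvert W\rvert$ unit-capacity edges $(v,t)$; hence max-flow min-cut (Theorem~\ref{thm:cut-flow}) forces each such edge to be present in any feasible subgraph and, in the orientation setting, to be oriented from $v$ to $t$. With this forced structure, an orientation of $G'$ lying in $\s_{\lvert W\rvert}(G')$ is in canonical one-to-one correspondence with an orientation of $G$ in which $t$ can receive one unit of flow from every $v \in W$; by standard flow decomposition (and the infinite effective capacity of the original edges), this is equivalent to $s$ having a directed path to every $v \in W$ in the underlying orientation of $G$. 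Symmetrically, a subgraph of $G'$ in $\F_{\lvert W\rvert}(G')$ must contain every edge $(v,t)$, and then existence of a flow of size $\lvert W\rvert$ is equivalent to $s$ being connected (in the undirected sense) to every $v \in W$ in the corresponding subgraph of $G$ — i.e., to $W \cup \{s\}$ lying in a single connected component.

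Combining these two bijections with the equality above yields Theorem~\ref{thm:steiner}. The main obstacle is verifying that the two translations are truly bijective, in particular ruling out spurious orientations or subgraphs of $G'$ that do not come from this correspondence; this follows cleanly from the elementary min-cut argument noted above, which pins down both the presence and the orientation of the edges incident to $t$. Once that is checked, Theorem~\ref{thm:steiner} is an immediate corollary of Theorem~\ref{thm:flow equality}.
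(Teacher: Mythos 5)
Your proposal is correct and follows essentially the same route as the paper: the text immediately preceding Theorem~\ref{thm:steiner} performs exactly this reduction (add a sink $t$ joined to each $w\in W$ by a unit-capacity edge, make the original edges effectively uncapacitated, and apply Theorem~\ref{thm:flow equality} to $\s_{\lvert W\rvert}(G')$). Your min-cut argument pinning down the presence and orientation of the edges incident to $t$ correctly fills in the bijection that the paper dismisses with ``it is not hard to see.''
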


\begin{proposition}
Let $t$ be the size of a minimum unweighted Steiner tree for $W\cup\{s\}$. Then:
$$\dvc(\s)=\vc(\neg\s)=m-t. $$
\end{proposition}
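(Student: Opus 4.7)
The plan is to realize $\s = \s_{s,W}(G)$ as a special case of the flow system of Section~\ref{flow}. Specifically, via the reduction introduced just before Theorem~\ref{thm:steiner}, I would adjoin a new sink vertex to $G$ joined to every $w \in W$ by a unit-capacity edge, and assign infinite capacity to the original edges; the resulting flow system $\s_{|W|}$ on this augmented network agrees with $\s$ on every orientation of the original edges of $G$. Consequently, Theorem~\ref{thm:flow SE} applies and gives that $\s$ is $SE$, so Theorem~\ref{thm:SE char} yields $\dvc(\s) = \vc(\s)$ at once. By Lemma~\ref{lem:complement}, $\neg\s$ is also $SE$, providing the foothold for the claim about $\vc(\neg\s)$.

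The next step is to carry Lemmas~\ref{lem:flow1} and~\ref{lem:flow2} over to this setting and obtain the combinatorial characterization
$$
\sstr(\s) \;=\; \str(\s) \;=\; \bigl\{X \subseteq E : W \cup \{s\} \text{ lies in a single connected component of } G_{E-X}\bigr\}.
$$
For ``$\supseteq$'' (witnessing strong shattering, in the style of Lemma~\ref{lem:flow1}): if $W \cup \{s\}$ sits in one component of $G_{E-X}$, I would pick any spanning tree of that component and orient it outward from $s$; every extension of this partial orientation to $E$ preserves a directed $s$-to-$w$ path for each $w \in W$. For ``$\subseteq$'' (witnessing shattering, in the style of Lemma~\ref{lem:flow2}): if $W \cup \{s\}$ is split by some cut in $G_{E-X}$, the cut is contained in $X$, and orienting each of its edges from the $w$-side towards the $s$-side produces a partial orientation of $X$ with no extension in $\s$, contradicting $X \in \str(\s)$.

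The final step evaluates
$$
\dvc(\s) \;=\; \vc(\s) \;=\; \max\{|X| : W \cup \{s\} \text{ connected in } G_{E-X}\} \;=\; m - \min\{|Y| : W \cup \{s\} \text{ connected in } G_Y\}.
$$
Any minimizing $Y$ is acyclic (a cycle edge can be deleted without breaking connectivity) and contains $W \cup \{s\}$ in a single component, so it is a subtree of $G$ spanning $W \cup \{s\}$, i.e., a Steiner tree; by definition its smallest size is $t$, yielding $\dvc(\s) = m - t$. The stated equality with $\vc(\neg\s)$ I would obtain by feeding the characterization above through Lemma~\ref{lem:star str and sstr}, which identifies $\str(\neg\s)$ with $\sstr(\s)^*$, and translating the resulting co-complement back into graph-theoretic terms using the $SE$-property of $\neg\s$.

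The hard part will be the ``$\subseteq$'' inclusion in the characterization of $\str(\s)$, namely producing, from any disconnecting cut, a single partial orientation of $X$ that no extension can rescue. The trick, paralleling Lemma~\ref{lem:flow2}, is that once every cut-crossing edge of $X$ is oriented as a backward arc (from the $w$-side to the $s$-side), the underlying cut persists as a directed cut with no forward arcs in every completion, so no $s$-to-$w$ directed path can exist regardless of how the remaining edges of $E-X$ are oriented.
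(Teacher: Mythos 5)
Your characterization
$$\sstr(\s)=\str(\s)=\setdef{X\subseteq E}{W\cup\{s\}\mbox{ is connected in } G_{E-X}}$$
is correct, and both halves of your argument are sound: a spanning tree of the relevant component of $G_{E-X}$ oriented away from $s$ witnesses strong shattering, and orienting every $X$-edge of the component cut backward defeats every extension, giving the converse. Combined with the observation that a minimal edge set connecting $W\cup\{s\}$ is a Steiner tree, this yields $\dvc(\s)=\vc(\s)=m-t$. This is essentially the paper's route: the paper obtains these facts by citing the flow reduction together with Proposition~\ref{prop:vc flow}, and it explicitly remarks that a direct proof of the $SE$ property (which is what your adaptation of Lemmas~\ref{lem:flow1} and~\ref{lem:flow2} amounts to) is possible.

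The gap is the final step concerning $\vc(\neg\s)$. Pushing your characterization through Lemma~\ref{lem:star str and sstr} gives $\str(\neg\s)=\sstr(\s)^{*}=\setdef{Y\subseteq E}{W\cup\{s\}\mbox{ is disconnected in } G_Y}$, and the largest such $Y$ has size $m$ minus a minimum Steiner \emph{cut} for $W\cup\{s\}$, not $m$ minus a minimum Steiner tree. These differ already when $G$ is the two-edge path on $s,a,b$ with $W=\{b\}$: there $m-t=0=\dvc(\s)$, yet $\neg\s$ shatters the singleton $\{e\}$ for $e=\{s,a\}$ (if $e$ is oriented $s\to a$, orient $\{a,b\}$ as $b\to a$; if $e$ is oriented $a\to s$, any completion lies in $\neg\s$), so $\vc(\neg\s)=1$. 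Hence the equality $\vc(\neg\s)=m-t$ cannot be obtained by the translation you defer to, because it is false as literally stated; the ``$\neg$'' in the proposition is evidently a typo for $\vc(\s)$ (compare Proposition~\ref{prop:vc flow}, which has no negation), and that corrected claim is exactly what the rest of your argument proves. You should flag the typo explicitly rather than promise a co-complement computation that does not go through.
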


A result equivalent to Theorem~\ref{thm:steiner} was proven recently by Linusson~\cite{linusson}. Note that ``$\s_{s,W}$ is $SE$'' can be proven directly, without proving the more general flow-result first. As special cases, when $W = \{t\}$, the system $\s_{s,W}$ consists of all orientations with a path from $s$ to $t$, and when $W = V-\{s\}$, the system $\s_{s,W}$ consists of all orientations in which $s$ is a root. We obtain equalities between the number of orientations that admit an $s $ - $ t$ path and the number of subgraphs in which $s$ and $t$ are connected, respectively between the number of orientations in which $s$ is a root and the number of connected subgraphs. 

We present another example: let $G$ be a graph with edge lengths $w:E(G) \rightarrow \mathbb{R}^{\geq 0}$, let $A\subseteq V(G)$ and $B\subseteq V(G)$. Let $\s_{A,B,d}(G)$ denote the system of all orientations for which there exist $u\in A$ and $v\in B$, such that the distance from $u$ to $v$ is at most $d$. The following transformation can be made: add source $s$ and destination $t$ to $G$, and connect every vertex in $A$ to $s$, respectively every vertex in $B$ to $t$, using edges of zero length. Denoting the obtained graph by $G'$, we can see that $\s_{A,B,d}(G)$ is transformed into $\s_{d}(G')$. The results are analogous to Theorem~\ref{thm:dist equality} and Proposition~\ref{prop:vc dist}.

\ignore{
\lk{for the record: we did not include the following SE examples:\\
* $A-to-B$ flow $\geq w_i$ (reduce to flow) $w_i$ can be parametrized either by $|A|$ or by $|B|$\\
* $A-to-B$ distance $\leq d_i$ (reduce to distance) $d_i$ can be parametrized either by $|A|$ or by $|B|$\\
* $s$ reaches all $W$, distances $\leq d_i$ (not reducible (?))\\
* $s$ sends flow $\geq w_i$ to at least one of $t_i$ (SE?)
}

\lk{can we say something intelligent about relation to LP duality?} \lk{see wiki article "LP-type problem" - is there some similar characterization here?}
}
\subsection{Discussion} \label{discussion}


Several characterizations of $SE$ systems were given in the past twenty years~\cite{BR95,Dress2,Greco98,moran_thesis}. In this subsection, we present a characterization which was found independently by Lawrence~\cite{Law}, and by Bollob{\'a}s and Radcliffe~\cite{BR95} and seems to be more natural in the context of graphs. To formalize this characterization, we need to introduce some concepts. 

Let $X$ be a set. Given $Y\subseteq X$, a $Y$-\emph{cube} of $\{0,1\}^X$ is an equivalence class of the following equivalence relation on $\{0,1\}^X$: \ \  \framebox{$u\sim v$: ``$u$ agrees with $v$ on $X-Y$''}. A \emph{cube} of $\{0,1\}^X$ is a $Y$-cube for some $Y\subseteq X$. Given $C$, a $Y$-cube of $\{0,1\}^X$, we define $\dim(C) = Y$. Note that the number of $Y$-cubes is $2^{|X-Y|}$, they are mutually disjoint and they cover $\{0,1\}^X$.

Let ${\s=\system{S}{X}}$ be a system and $C \subseteq C(\s)$ be a cube. It is useful to consider the structure $\systemexp{S(\s) \cap C}{C}$ as a system. However, formally it is not a system. To deal with this technicality we give the following definition: The \emph{restriction} of $\s$ to $C$ is the system ${\system{Q}{\dim(C)}}$ where $Q = \setdef{\left.f\right|_{\dim(C)} }{f\in S\cap C}$. A system is a \emph{restriction} of $\s$ if it is a restriction of $\s$ to some cube $C\subseteq C(\s)$. 

The \emph{antipodal} system of $\s$ is $\ap(\s) = \system{\bar{S}}{X}$, where $\bar{S} = \setdef{f}{1-f \in S}$. A system $\s$ is \emph{symmetric}, if $\s = \ap(\s)$. Recall that a system $\s$ is trivial if $S(\s)\in\{\emptyset,C(\s)\}$.

\begin{theorem}[Lopsidedness~\cite{Law,BR95}]\label{thm:lopsidedness}
A system $\s$ is $SE$, iff it has no non-trivial, symmetric restrictions.  ~\hfill\qed
\end{theorem}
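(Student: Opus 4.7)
I would prove the two directions of the biconditional separately. For the forward direction ($SE$ implies no non-trivial symmetric restrictions), I would prove two auxiliary lemmas and combine them by contrapositive. First, that $SE$ is preserved under restriction: if $\s$ is $SE$, then so is $\s|_C$ for every cube $C\subseteq C(\s)$. Iterating one coordinate at a time reduces this to the restrictions $\s_0,\s_1$ associated with some $x\in\dim(\s)$; from the proof of Theorem~\ref{thm:sandwich} one obtains the chain $|\s_0|+|\s_1|=|\s|=|\str(\s)|\geq|\str(\s_0)|+|\str(\s_1)|\geq|\s_0|+|\s_1|$, and the coincidence of its endpoints forces $|\s_b|=|\str(\s_b)|$. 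Second, that a non-trivial symmetric system is not $SE$: I would take $Y\in\sstr(\s)$ of maximum size with witness $g\in\{0,1\}^{\dim(\s)-Y}$; symmetry implies $1-g$ is also a witness, and non-triviality forces $\dim(\s)-Y\neq\emptyset$, so for any $y$ therein, $Y\cup\{y\}$ is shattered because for each prescribed $y$-value exactly one of $g,1-g$ matches it, giving the needed extension. Hence $\vc(\s)>\dvc(\s)$ and $\s$ is not $SE$. Combining the two lemmas, any non-trivial symmetric restriction $\s'$ of $\s$ is itself not $SE$, so $\s$ is not $SE$ either.

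For the reverse direction, I would induct on $n=|\dim(\s)|$ with trivial base. In the inductive step, pick $x\in\dim(\s)$ and form $\s_0,\s_1$. Every restriction of $\s_b$ corresponds canonically to a restriction of $\s$, so $\s_0,\s_1$ inherit the ``no non-trivial symmetric restriction'' hypothesis and are $SE$ by induction. Decomposing $\str(\s)$ according to membership of $x$: a direct quantifier check yields $\{Y\in\str(\s):x\in Y\}=\{Y\cup\{x\}:Y\in\str(\s_0)\cap\str(\s_1)\}$, while $\{Y\in\str(\s):x\notin Y\}\supseteq\str(\s_0)\cup\str(\s_1)$. Combining these with $|\s|=|\s_0|+|\s_1|=|\str(\s_0)|+|\str(\s_1)|$ and the identity $|A\cup B|+|A\cap B|=|A|+|B|$, the desired $|\s|=|\str(\s)|$ reduces to showing that the last inclusion is an equality.

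The main obstacle is establishing the tightness of this inclusion. If some ``bad'' $Y\in\str(\s)$ with $x\notin Y$ lay outside $\str(\s_0)\cup\str(\s_1)$, there would exist $f_0,f_1\in\{0,1\}^Y$ such that every $h\in S$ extending $f_j$ takes $x$-value $1-j$. From this data one must construct a non-trivial symmetric restriction of $\s$, yielding the needed contradiction. The natural approach selects witnesses $h_0,h_1\in S$ for $f_0,f_1$ agreeing on as many coordinates outside $Y\cup\{x\}$ as possible, and takes the cube $C$ fixing those shared coordinates to their common value; then $\s|_C$ contains the projections of $h_0,h_1$ (hence is non-empty), is proper (no element of $S$ extends $f_0$ with $x$-value $0$), and the balance enforced by the bad-$Y$ witnesses should yield antipodal closure within $C$. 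Verifying that $h_0,h_1$ can always be chosen so that $\s|_C$ is genuinely symmetric is the technical heart of the argument, and is where I would draw on the constructions in the cited works of Lawrence and Bollob\'as--Radcliffe.
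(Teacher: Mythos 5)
The paper states Theorem~\ref{thm:lopsidedness} without proof (it cites Lawrence and Bollob\'as--Radcliffe), so your proposal must stand on its own. Your forward direction does: the collapsing chain $|\s_0|+|\s_1|=|\s|=|\str(\s)|\geq|\str(\s_0)|+|\str(\s_1)|\geq|\s_0|+|\s_1|$ correctly shows that the two single-coordinate restrictions of an $SE$ system are $SE$ (hence, iterating, every restriction is), and your argument that a non-trivial symmetric system has $\vc(\s)>\dvc(\s)$ -- taking a maximum $Y\in\sstr(\s)$ with witness $g$, noting $1-g$ is also a witness, and shattering $Y\cup\{y\}$ -- is correct and clean. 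The setup of the reverse direction is also sound: the equality $\{Y\in\str(\s):x\in Y\}=\{Y\cup\{x\}:Y\in\str(\s_0)\cap\str(\s_1)\}$ holds, and you correctly reduce the claim to showing that a ``bad'' $Y$ (shattered by $\s$ but by neither $\s_0$ nor $\s_1$) forces a non-trivial symmetric restriction.

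That last step, however, is the entire substance of the hard direction, and you do not prove it; you explicitly defer ``the technical heart'' to the cited works, which is an acknowledged gap rather than a proof. Moreover, the construction you sketch is doubtful as stated. If $D\subseteq Y$ is the set of coordinates on which $f_0$ and $f_1$ differ, then the projections of $h_0$ and $h_1$ onto $\dim(C)$ still agree on $Y-D$, so they are not even antipodal points of your cube $C$ unless $D=Y$; at minimum $C$ must also fix the coordinates of $Y-D$. Even after that repair, the restriction $S(\s)\cap C$ may contain many points besides the two chosen witnesses, and nothing in the data you have extracted (the one-sided extension property of $f_0$ and $f_1$, plus a maximal-agreement choice of $h_0,h_1$) forces closure under the antipodal map of $C$. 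The known proofs get around this by a genuine further argument -- e.g., passing to a minimal non-$SE$ restriction and showing that minimality itself forces antipodal symmetry -- and that is precisely the piece missing here. As it stands you have proved only that the absence of non-trivial symmetric restrictions is necessary for $SE$, not that it is sufficient.
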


Note that the systems discussed in \S\,\ref{sec31} and \S\,\ref{sec32} are symmetric and thus they are $SE$ if and only if they are trivial. For example, consider $\s_{cyc}$, the system of all cyclic orientations of $G$, and let $Y\subseteq E(G)$. A $Y$-cube $C$ of $\O(G)$ corresponds to a partial orientation of $G$ in which only the edges of $E(G)-Y$ are oriented. The restriction of $\s$ to $C$ corresponds to all extensions of the partial orientation to an orientation of $G$ which contains a cycle. In this case, the system $\s_{cyc}$ is symmetric, since flipping all the edges of a directed cycle yields a directed cycle. Also, $\s_{cyc}$ is trivial if and only if $G$ is a forest.
 
Let us define the \emph{flip-distance} between two orientations $f$ and $g$ of a graph $G$ as the number of edges in $E(G)$ on which $f$ and $g$ differ. Let $P$ be a property of orientations, such that the corresponding system, $\s_P$, is $SE$. From a known property of $SE$ systems~\cite{BR95, Dress2}, it follows that the orientations in $\s_P$ form an isometric subgraph of the hypercube $\{0,1\}^{E}$, with edges connecting pairs of orientations of flip-distance one (\emph{partial cube} property). As a consequence, if $f$ and $g$ are orientations of $G$ satisfying $P$, then there exists a sequence of edge-flips from $f$ to $g$, with all intermediate orientations of the sequence satisfying property $P$ and with the length of the sequence equal to the flip-distance between $f$ and $g$. 




\ignore{
\section{Further work}


A possible application of the connection presented in this paper is to a particular type of learning problem, that could be called ``orientation learning''. In this type of problem, we are given a graph $G$, and a property $P$ of orientations. A single target orientation satisfying $P$ is selected and we are given a partial orientation (a sample) that agrees with the target orientation. A learning function for $P$ is a function that, given a large enough sample of any target orientation in $P$ returns an orientation that is a good approximation of the target orientation. 

In the terminology of learning theory, the set of orientations of $G$ that satisfy $P$ is the \emph{hypothesis space}. The \emph{target hypothesis} is an unknown, fixed orientation $d$ that satisfies $P$. Let $S$ be a sample of the edges of $G$, drawn from an arbitrary, fixed probability distribution $\mathcal{D}$ over $E(G)$, and let $s$ be the orientation of $S$ that agrees with $d$. We denote by $\s_P$ the system of orientations satisfying $P$. Using classical results of VC-theory~\cite{blumer}, we can state the following:

\begin{proposition}\label{proplast}
For any $\epsilon, \delta>0$, if $$|S| \geq \frac{8}{\epsilon}\left[\vc(\s_P)\ln\frac{16}{\epsilon} + \ln\frac{2}{\delta}\right],$$ then with probability at least $1-\delta$, for every extension $d'$ of $s$ that satisfies $P$, the probability that $d'(x) \ne d(x)$, for $x$ drawn from $\mathcal{D}$ is at most $\epsilon$. 
\end{proposition}

Statements like Proposition~\ref{proplast} relate the ``learnability'' of a property of orientations to the VC-dimension of the corresponding system. As we have shown throughout the paper, for many natural properties, the VC-dimension can be related to structural properties of the underlying graph, and in may cases it can be efficiently computed. Problems of this flavor have been considered before (e.g.\ \cite{alontuza}), but to our knowledge, the connection to VC-theory has not yet been explored.
}
\section{Conclusion and further work}

In this paper we studied a variety of natural properties of graph orientations. In particular, we have shown that for many of these properties, concepts related to VC-theory, such as VC-dimension and shattering have natural interpretations. 

One natural question is whether the graph-theoretical results presented in this paper can be proven more directly. In particular: do there exist natural injective and surjective maps that imply the different inequalities? 

Another possible application of the connection presented in this paper is to a particular type of supervised learning problem, that could be called ``orientation learning''. In this type of problem, a graph $G$ and a property $P$ of orientations are given. A single target orientation $d$ satisfying $P$ is selected to be learned. In machine learning terms, $P$ is the \emph{hypothesis space}, and $d$ is the \emph{target hypothesis}. The problem can be formulated both in the classical \emph{passive learning} or in the \emph{active learning}~\cite{settles} framework. 

Different variants of these problems have been studied, mostly for complete and random graphs, where $P$ is the property of acyclic orientations~\cite{alontuza,HuangKK11,Ailon12}. We believe that the connections presented in this paper may be useful for understanding the learnability of other properties $P$ in other classes of graphs as well.

\section*{Acknowledgements} \label{ack}
This work benefited from discussions with Nir Ailon, Ron Holzman, Ami Litman, Kurt Mehlhorn, Shlomo Moran, Rom Pinchasi, and Raimund Seidel. The work developed from results in the Master thesis of Shay Moran; we would like to acknowledge again the contribution of the advisor of this thesis - Ami Litman.

\newpage

\bibliographystyle{abbrv}
\bibliography{paper}		
\balancecolumns
\end{document}